\newtheorem{thm}{Theorem}[section]
 \newtheorem{example}[thm]{Example}
 \newtheorem{prop}[thm]{Proposition}
 \newtheorem{defn}[thm]{Definition}
 \newtheorem{lem}[thm]{Lemma}
 \newtheorem{cor}[thm]{Corollary}
 \newtheorem{fact}[thm]{Fact}
  \newtheorem{rem}[thm]{Remark}
\begin{document}

\title{Remarks on rational solutions of Yang-Baxter equations }

\author{Thilo Henrich}

\address{
Mathematisches Institut,
Universit\"at Bonn,
Endenicher Allee 60, 53115 Bonn,
Germany
}
\email{henrich@math.uni-bonn.de}
\thanks{This work was supported by the DFG project Bu-1866/2-1.}

\begin{abstract}
In this article, we study unitary rational solutions of the associative Yang-Baxter equation with three spectral parameters. We explain how such solutions arise from the geometry of vector bundles on a cuspidal cubic curve. Moreover, we investigate how these solutions are related to the quantum and classical Yang-Baxter equations.
\end{abstract}

\maketitle



\section{Introduction}

\noindent In this article we study solutions of the associative Yang-Baxter
equation (AYBE)\begin{equation}
\begin{array}{c}
r^{12}(u;y_{1},y_{2})\, r^{23}(u+v;y_{2},y_{3})=\\
=r^{13}(u+v;y_{1},y_{3})\, r^{12}(-v;y_{1},y_{2})+r^{23}(v;y_{2},y_{3})\, r^{13}(u;y_{1},y_{3}).\end{array}\label{eq:AYBE in 3 variables}\end{equation}
 Here $r:(\mathbb{C}^{3},0)\rightarrow A\otimes A$ is the germ of
a meromorphic function for $A=\mbox{Mat}_{n\times n}(\mathbb{C})$.
Moreover, for $i\neq j\in\{1,2,3\}$, we use the notation $r^{ij}=r\circ\rho_{ij}$
for the composition of $r$ with the canonical embedding $\rho_{ij}:A^{\otimes2}\rightarrow A^{\otimes3}$,
e.g. $\rho_{13}(a\otimes b)=a\otimes\mathds{1}\otimes b$. A solution
$r$ of (\ref{eq:AYBE in 3 variables}) is called unitary if \[
\begin{array}{c}
r^{12}(v;y_{1},y_{2})=-r^{21}(-v;y_{2},y_{1})\end{array}\]
and non-degenerate if the tensor $r(v;y_{1},y_{2})\in A\otimes A$
is non-degenerate for generic $v,y_{1},y_{2}$. Non-degenerate unitary
solutions of (\ref{eq:AYBE in 3 variables}) have previously been
studied by Polishchuk \cite{Polischuck2002,Polischuckb} and Burban,
Kreu\ss{}ler \cite{Burban}.

We will focus on solutions $r$ of (\ref{eq:AYBE in 3 variables})
satisfying the following Ansatz: \begin{equation}
\begin{array}{c}
r(v;y_{1},y_{2})={\displaystyle \frac{\mathds{1}\otimes\mathds{1}}{v}}+r_{0}(y_{1},y_{2})+v\, r_{1}(y_{1},y_{2})+v^{2}\, r_{2}(y_{1},y_{2})+\dots\end{array}\label{eq:Laurent assumption}\end{equation}
This is motivated by the following fact. Let $\mbox{pr}:A\rightarrow\mathfrak{sl}_{n}(\mathbb{C})$
denote the canonical projection $X\mapsto X-\frac{\mbox{tr}X}{n}\mathds{1}$.
It is not difficult to show, see \cite{Polischuck2002,Burban}, that
$ $$\bar{r}_{0}(y_{1},y_{2})=(\mbox{pr}\otimes\mbox{pr})\left(r_{0}(y_{1},y_{2})\right)$
satisfies the classical Yang-Baxter equation (CYBE)\begin{equation}
\begin{array}{c}
{\displaystyle \frac{{}}{{}}}\Bigl[\bar{r}_{0}^{12}(y_{1},y_{2}),\bar{r}_{0}^{23}(y_{2},y_{3})\Bigr]+\Bigl[\bar{r}_{0}^{12}(y_{1},y_{2}),\bar{r}_{0}^{13}(y_{1},y_{3})\Bigr]=\\
{\displaystyle \frac{{}}{{}}}=-\Bigl[\bar{r}_{0}^{13}(y_{1},y_{3}),\bar{r}_{0}^{23}(y_{2},y_{3})\Bigr].\end{array}\label{eq:CYBE}\end{equation}
In section \ref{sec:The-procedure} we present an algorithm attaching
to any pair of coprime integers $(n,d)$ with $0< d<n$ a non-degenerate
unitary solution $r_{(n,d)}$ of the AYBE (\ref{eq:AYBE in 3 variables}).
The idea behind this algorithm is the computation of certain triple
Massey products in the bounded derived category of coherent sheaves
$\mbox{D}^{b}(\mbox{Coh}(E))$ for a cuspidal cubic curve $E=V(y^{2}z-x^{3})\subset\mathbb{P}^{2}$.
In all examples computed so far, solutions produced by this algorithm
satisfy (\ref{eq:Laurent assumption}). Moreover, in all these examples
$\bar{r}_{0}$ has no infinitesimal symmetries, i.e. there is no non-trivial
$a\in\mathfrak{sl}_{n}(\mathbb{C})$ such that \emph{\[
\Bigl[\bar{r}_{0}(y_{1},y_{2}),a\otimes\mathds{1}+\mathds{1}\otimes a\Bigr]=0.\]
}As observed by Polishchuk \cite{Polischuckb}, certain unitary solutions
of (\ref{eq:AYBE in 3 variables}) are closely related with the quantum
Yang-Baxter equation. The remaining sections of this paper are dedicated
to the generalization of his results. Our main result is the following:

\begin{thm}
\label{thm: generalizing Polishchuk}Let $r(v;y_{1},y_{2})$ be a
non-degenerate unitary solution of the AYBE (\ref{eq:AYBE in 3 variables})
of the form (\ref{eq:Laurent assumption}). If \emph{$\bar{r}_{0}(y_{1},y_{2})=(\mbox{pr}\otimes\mbox{pr})\left(r_{0}(y_{1},y_{2})\right)$}
has no infinitesimal symmetries, then the following hold.\\

\noindent i) For fixed $v_{0}\in\mathbb{C}^{\times}$, $\tilde{r}(y_{1},y_{2})=r(v_{0};y_{1},y_{2})$
is a solution of the quantum Yang-Baxter equation (QYBE) \begin{equation}
\tilde{r}^{12}(y_{1},y_{2})\,\tilde{r}^{13}(y_{1},y_{3})\,\tilde{r}^{23}(y_{2},y_{3})=\tilde{r}^{23}(y_{2},y_{3})\,\tilde{r}^{13}(y_{1},y_{3})\,\tilde{r}^{12}(y_{1},y_{2}).\label{eq:QYBE}\end{equation}

\noindent ii) Let $s(v;y_{1},y_{2})$ be another non-degenerate unitary
solution of the AYBE (\ref{eq:AYBE in 3 variables}) of the form (\ref{eq:Laurent assumption})
with \emph{$(\mbox{pr}\otimes\mbox{pr})\left(s_{0}\right)=\bar{r}_{0}$.}
Then there exists a meromorphic function $g:\mathbb{C}\rightarrow\mathbb{C}$
such that \emph{\[
s(v;y_{1},y_{2})=\exp\Bigl(v\left(g(y_{2})-g(y_{1})\right)\Bigr)r(v;y_{1},y_{2}).\]
}\noindent If additionally both $r$ and $s$ can be obtained by the procedure
described in section \ref{sec:The-procedure}, then $g$ is holomorphic.
Thus the above equation states that $r$ and $s$ are gauge equivalent
in that case.
\end{thm}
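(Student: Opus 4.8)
My plan splits along the two parts of the statement.

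\medskip

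For part (i), the plan is to derive the QYBE (\ref{eq:QYBE}) from the AYBE (\ref{eq:AYBE in 3 variables}) by a direct computation in the spirit of Polishchuk, with unitarity and the pole normalization doing the real work. The guiding observation is that the AYBE links factors at the staggered parameters $u,\,u+v,\,-v,\,v$, whereas the QYBE (\ref{eq:QYBE}) carries the \emph{same} value $v_0$ in all three factors; the only mechanism that can align staggered parameters with a common value is the simple pole of $r$ at the origin prescribed by (\ref{eq:Laurent assumption}). Concretely, I would combine the AYBE with its unitary reflection $r^{12}(v)=-r^{21}(-v)$, which reverses the two operator orderings appearing in (\ref{eq:QYBE}), and specialize the auxiliary parameters so that the inhomogeneous terms on the right of (\ref{eq:AYBE in 3 variables}) cancel in pairs against each other, leaving exactly the braid relation for $\tilde r=r(v_0)$. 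The pole with residue $\mathds{1}\otimes\mathds{1}$ is what promotes the \emph{classical} relation (\ref{eq:CYBE}) for $\bar r_0$ to the full \emph{quantum} relation; a residual term proportional to a central element could obstruct equality, and this is where I would invoke the absence of infinitesimal symmetries of $\bar r_0$ to force that term to vanish.

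For part (ii) I would first reduce to the case in which the constant terms agree exactly. Both $r$ and $s$ have principal part $\frac{\mathds{1}\otimes\mathds{1}}{v}$, so $d:=s-r$ is regular at $v=0$; write $d=d_0+v\,d_1+\cdots$. Since $(\mbox{pr}\otimes\mbox{pr})(s_0-r_0)=0$ we have $d_0\in\mathds{1}\otimes A+A\otimes\mathds{1}$, and unitarity of both $r$ and $s$ forces the skew shape $d_0=\mathds{1}\otimes a(y_1,y_2)-a(y_2,y_1)\otimes\mathds{1}$ for a matrix valued $a$. Subtracting the two AYBE's and collecting the lowest nontrivial order in $v$, then applying $\mbox{pr}$ to each of the three tensor factors, yields a linear identity which, holding identically in $y_1,y_2,y_3$, says that $\mbox{pr}(a)$ is a constant element of $\mathfrak{sl}_n(\mathbb{C})$ with $\bigl[\bar r_0,\mbox{pr}(a)\otimes\mathds{1}+\mathds{1}\otimes\mbox{pr}(a)\bigr]=0$; by hypothesis $\mbox{pr}(a)=0$, so $a$ is scalar and $d_0=\gamma(y_1,y_2)\,\mathds{1}\otimes\mathds{1}$. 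Feeding the scalar ansatz $\lambda=\exp(\mu)$ into (\ref{eq:AYBE in 3 variables}) shows that an admissible scalar gauge must satisfy
\[ \mu(u;y_1,y_2)+\mu(u+v;y_2,y_3)=\mu(v;y_2,y_3)+\mu(u;y_1,y_3)=\mu(u+v;y_1,y_3)+\mu(-v;y_1,y_2); \]
these equations force $\mu$ to be linear in $v$ with $\mu(0;\cdot)=0$ (to preserve the residue) and reduce to the additive cocycle $c(y_1,y_3)=c(y_1,y_2)+c(y_2,y_3)$, whose solutions are $\mu=v(g(y_2)-g(y_1))$. Matching $\gamma$ fixes $g$, and replacing $s$ by $\exp(-v(g(y_2)-g(y_1)))\,s$ we may assume $s_0=r_0$.

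It then remains to prove the rigidity statement: a non-degenerate unitary solution of (\ref{eq:AYBE in 3 variables}) of the form (\ref{eq:Laurent assumption}), with $\bar r_0$ free of infinitesimal symmetries, is determined by its constant term $r_0$. I would argue by induction on the order in $v$. If $s$ and $r$ first differ at order $v^{k}$ with $k\ge 1$, substituting $s=r+v^{k}d_k+\cdots$ into the AYBE and subtracting the AYBE for $r$, the quadratic terms drop out of the lowest surviving order and one is left with a \emph{linear} equation $L(d_k)=0$, where $L$ is the linearization of the AYBE at $r$, built from $r_0$ and the fixed pole. The crucial point is that the kernel of $L$ on tensors of this type is controlled by the infinitesimal symmetries of $\bar r_0$, so the no-symmetries hypothesis makes $L$ injective and forces $d_k=0$, a contradiction; hence $s=r$, and undoing the gauge gives $s=\exp(v(g(y_2)-g(y_1)))\,r$. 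I expect this injectivity to be the main obstacle: setting up the correct cohomological framework in which ``kernel of the linearized AYBE $=$ infinitesimal symmetries of $\bar r_0$'' holds rigorously is the technical heart of the argument, while the reduction and gauge bookkeeping above are comparatively formal.

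Finally, for the holomorphy addendum, suppose $r$ and $s$ both arise from the procedure of Section \ref{sec:The-procedure}. Being assembled from triple Massey products in $\mbox{D}^{b}(\mbox{Coh}(E))$, such solutions are holomorphic in $(y_1,y_2)$ off the diagonal, and non-degeneracy keeps the tensors invertible there. Consequently the scalar ratio $\exp(v(g(y_2)-g(y_1)))$ relating $s$ and $r$ is holomorphic and nowhere vanishing, so $v(g(y_2)-g(y_1))$ is holomorphic in all variables; fixing $y_1$ then exhibits $g$ as holomorphic in $y_2$ up to an additive constant, which is precisely the asserted gauge equivalence of $r$ and $s$.
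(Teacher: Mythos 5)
Your plan contains genuine gaps in both parts. For part (i), the assertion that one can ``specialize the auxiliary parameters so that the inhomogeneous terms on the right of (\ref{eq:AYBE in 3 variables}) cancel in pairs\dots leaving exactly the braid relation'' is not what happens: combining the AYBE with its dual and multiplying by a third factor of $r$ (as in lemma \ref{lem:QYBE in terms of s*r-r*s}) does not produce a cancellation but rather identifies the QYBE defect with $s^{23}(u)\,r^{13}(2u)-r^{13}(2u)\,s^{23}(-u)$, where $s(u)=r(u)\,r(-u)$. The entire difficulty is then to show that $s(u;y_1,y_2)=a\otimes\mathds{1}+\mathds{1}\otimes a+(\mbox{scalar})\,\mathds{1}\otimes\mathds{1}$ with $a$ an infinitesimal symmetry of $r$ (lemma \ref{lem:form of s}); your ``residual term proportional to a central element'' is precisely this claim, which you assume rather than prove. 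That lemma in turn rests on first establishing that the pole of $r$ along $y_1=y_2$ is simple with residue a scalar multiple of the permutation tensor $P$ (lemma \ref{lem:poles are simple}, corollary \ref{cor:simple poles exist}) --- the pole you invoke, the one at $v=0$ from (\ref{eq:Laurent assumption}), is not the relevant one --- and on a delicate limit $y_2\to y_1$, $h\to 0$ using fact \ref{fac:P is the center}. None of this machinery appears in your outline, so the plan as written does not reach the conclusion.

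For part (ii), your reduction of $d_0$ to a scalar via the no-symmetries hypothesis is in the right spirit, but the rigidity step --- that a solution is determined by $r_0$ --- is exactly the point you defer to an unspecified ``cohomological framework'' in which the kernel of the linearized AYBE equals the infinitesimal symmetries of $\bar r_0$; you correctly identify this as the technical heart but give no argument for it, and as stated it is not the right mechanism. The paper's route is more elementary and does not need the symmetry hypothesis for this step: by lemma \ref{lem: r uniquely determined by r_0,r_1}, each $r_k$ with $k\ge 2$ is recovered recursively from lower-order coefficients because certain explicit polynomials in $u,v$ appearing in the degree-$(k-1)$ part of the dual AYBE are linearly independent, and $r_1$ is pinned down by $r_0$ via unitarity together with the cocycle identity $\alpha^{12}+\alpha^{13}+\alpha^{23}=0$ and regularity along $y_1=y_2$. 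The no-infinitesimal-symmetries hypothesis enters only in determining $r_0$ from $\bar r_0$ up to the term $\bigl(g(y_1)-g(y_2)\bigr)\mathds{1}\otimes\mathds{1}$. Your holomorphy addendum is essentially correct, but it presupposes the uniqueness statement you have not established. In short: the architecture of your plan is reasonable, but the two lemmas that carry all the weight --- the form of $r(u)r(-u)$ and the recursive determination of the Laurent coefficients --- are missing, and the substitutes you propose for them would not go through as described.
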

\noindent This result is proved in two steps, which are theorem \ref{thm: CYBE -> QYBE -> CYBE}
respectively theorem \ref{thm:recovering AYBE sol. from their CYBE}
and corollary \ref{cor: if r and s are both constructed geometrically}.
Let us add the following remarks:\\

\noindent $\bullet$ Theorem \ref{thm: generalizing Polishchuk} was
shown by Polishchuk, see \cite[Theorem 1.4]{Polischuckb} and \emph{\cite[Theorem 6]{Polischuck2002}},
in the case when $r(v;y_{1},y_{2})$ depends only on $v$ and the
difference $y=y_{1}-y_{2}$. However, solutions obtained by the algorithm
presented in section \ref{sec:The-procedure} do not have this property.

\noindent $\bullet$ The proof of theorem \ref{thm: generalizing Polishchuk}
is purely analytical.

\noindent $\bullet$ The solutions $\bar{r}_{0}$ of the CYBE (\ref{eq:CYBE})
obtained by the procedure presented in section \ref{sec:The-procedure}
belong to the class of \emph{rational} solutions. Rational solutions
of the CYBE (\ref{eq:CYBE}) have been classified by Stolin \cite{Stolin1991}.
As we shall show in a subsequent paper \cite{Henrichd}, the solutions
$\bar{r}_{0}$ produced by our algorithm form a subclass of rational
solutions which can be intrinsically described in terms of Stolin's
classification. One of the benefits of our method is that theorem
\ref{thm: generalizing Polishchuk} gives an explicit way of lifting
these solutions of the CYBE to solutions of the QYBE.

\noindent $\bullet$ There are solutions $r$ of (\ref{eq:AYBE in 3 variables})
which satisfy (\ref{eq:QYBE}) even though $\bar{r}_{0}$ has infinitesimal
symmetries. An example is given by the following function, depending
only on the difference $y=y_{1}-y_{2}$:\[
r(v,y)=\frac{1}{2v}\mathds{1}\otimes\mathds{1}+\frac{1}{y}\left(e_{11}\otimes e_{11}+e_{22}\otimes e_{22}+e_{12}\otimes e_{21}+e_{21}\otimes e_{12}\right).\]
 The corresponding solution of the CYBE (\ref{eq:CYBE}) is the rational
solution of Yang \[
\bar{r}_{0}(v)=\frac{1}{y}\left(\frac{1}{2}h\otimes h+e_{12}\otimes e_{21}+e_{21}\otimes e_{12}\right)\in\mathfrak{sl}_{2}(\mathbb{C})\otimes\mathfrak{sl}_{2}(\mathbb{C})\]
 whose infinitesimal symmetries are given by all non-zero elements
of $\mathfrak{sl}_{2}(\mathbb{C})$. Hence, there might exist a generalization
of theorem \ref{thm: generalizing Polishchuk} i).

\section{The algorithm\label{sec:The-procedure}}

In this section we present an algorithm which takes as input a pair
of coprime numbers $(n,d)$ with $0<d<n$ and produces a non-degenerate
unitary solution $r_{(n,d)}$ of the AYBE (\ref{eq:AYBE in 3 variables})
with values in $A\otimes A$ where $A=\mbox{Mat}_{n\times n}(\mathbb{C})$.
The algorithm is due to Burban and Kreu\ss{}ler \cite[Section 10]{Burban},
see in particular \cite[Algorithm 10.7]{Burban}. The actual content
of the described procedure is the computation of certain triple Massey
products in the bounded derived category of coherent sheaves $\mbox{D}^{b}(\mbox{Coh}(E))$
for a cuspidal cubic curve $E=V(y^{2}z-x^{3})\subset\mathbb{P}^{2}$.\\
 \\
 \textbf{$\star$ Step 1: construction of the matrix $J=J(n-d,d)$.}

\noindent We introduce the following map defined on all tuples of
coprime integers $(a,b)\neq(1,1)$:\[
\epsilon(a,b)=\left\{ \begin{tabular}{cc}
 $(a-b,b),$  &  $a>b$\\
$(a,b-a),$  &  $a<b$ \end{tabular}\right.\]
By assumption $(n,d)$ is a tuple of coprime integers. Hence it induces
a finite sequence of tuples ending with $(1,1)$, defined as follows.
We put $(a_{0},b_{0})=(n-d,d)$ and, as long as $(a_{i},b_{i})\neq(1,1)$,
we set $(a_{i+1},b_{i+1})=\epsilon(a_{i},b_{i})$. Next, let\[
J(1,1)=\left({\begin{tabular}{c|c}
 $0$  &  $1$ \\
\hline $0$  &  $0$ \end{tabular}}\right)\in\mbox{Mat}_{2\times2}(\mathbb{C}).\]
Assuming \[
J(a,b)=\left({\begin{tabular}{c|c}
 $J_{1}$  &  $J_{2}$ \\
\hline $0$  &  $J_{3}$ \end{tabular}}\right)\]
 with $J_{1}\in\mbox{Mat}_{a\times a}(\mathbb{C})$ and $J_{3}\in\mbox{Mat}_{b\times b}(\mathbb{C})$
has already been defined and that $(a,b)=\epsilon(p,q)$, we set\[
J(p,q)=\left\{ \begin{tabular}{cc}
 $\left({\begin{tabular}{c|cc}
 0  &  $\mathds{1}$  &  0 \\
\hline 0  &  $J_{1}$  &  $J_{2}$ \\
0  &  0  &  $J_{3}$ \end{tabular}}\right),$  &  $p=a$\\
\\$\left({\begin{tabular}{cc|c}
 $J_{1}$  &  $J_{2}$  &  0 \\
0  &  $J_{3}$  &  $\mathds{1}$ \\
\hline 0  &  0  &  0 \end{tabular}}\right),$  &  $q=b.$ \end{tabular}\right.\]

\noindent Hence, to $(n,d)$ we may associate the $n\times n$ matrix
$J=J(n-d,d)$ that is obtained from the matrix $J(1,1)$ and the sequence
$\bigl\{(n-d,d),...,(1,1)\bigr\}$ by applying the recursive procedure
described above.

\begin{example} \emph{
Let $(n,d)=(5,2)$. The induced sequence is $\bigl\{(3,2),(1,2),(1,1)\bigr\}$
and $J=J(3,2)$ is constructed as follows\emph{\[
\left({\begin{tabular}{c|c}
 $0$  &  $1$ \\
\hline $0$  &  $0$ \end{tabular}}\right)\rightarrow\left({\begin{tabular}{c|cc}
 $0$  &  $1$  &  $0$\\
\hline $0$  &  $0$  &  $1$\\
$0$  &  $0$  &  $0$ \end{tabular}}\right)\rightarrow\left({\begin{tabular}{ccc|cc}
 $0$  &  $1$  &  $0$  &  $0$  &  $0$ \\
$0$  &  $0$  &  $1$  &  $1$  &  $0$ \\
$0$  &  $0$  &  $0$  &  $0$  &  $1$ \\
\hline $0$  &  $0$  &  $0$  &  $0$  &  $0$ \\
$0$  &  $0$  &  $0$  &  $0$  &  $0$ \end{tabular}}\right).\]
}  }
\end{example}
\noindent \textbf{$\star$ Step 2: definition of $\mbox{Sol}_{n,d}^{v,y_{1}}$,
$\mbox{res}_{y_{1}}$ and $\mbox{ev}_{y_{2}}$.}

\noindent For the partition of $A=\mbox{Mat}_{n\times n}(\mathbb{C})$
induced by the partition of $J$ as in step 1, we introduce the following
subspace of the polynomial ring $A[z]$: \[
W_{n,d}=\left\{ F(z)=\left(\begin{tabular}{c|c}
 $W$  &  $X$ \\
\hline $Y$  &  $Z$ \end{tabular}\right)+\left(\begin{tabular}{c|c}
 $W'$  &  $0$ \\
\hline $Y'$  &  $Z'$ \end{tabular}\right)z+\left(\begin{tabular}{c|c}
 $0$  &  $0$ \\
\hline $Y''$  &  $0$ \end{tabular}\right)z^{2}\right\} .\]
Next, for $F(z)\in W_{n,d}$ we denote\[
F_{0}=\left(\begin{tabular}{c|c}
 $W'$  &  $X$ \\
\hline $Y''$  &  $Z'$ \end{tabular}\right)\mbox{ and }\, F_{\epsilon}=\left({\begin{tabular}{c|c}
 $W$  &  $0$ \\
\hline $Y'$  &  $Z$ \end{tabular}}\right).\]
Then for $v,y_{1}\in\mathbb{C}$, we define the following subspace
of $W_{n,d}$:\[
\mbox{Sol}_{n,d}^{v,y_{1}}=\Bigl\{ F(z)\in W_{n,d}\,\Bigl|\,\left[F_{0},J\right]+(y_{1}-v)F_{0}+F_{\epsilon}=0\Bigr\}.\]

\begin{prop}
\cite[Section 10]{Burban}\label{pro:res_y1 : Sol -> sln iso} The
vector space \emph{$\mbox{Sol}_{n,d}^{v,y_{1}}$} has dimension $n^{2}$
and for $y_{1}\neq y_{2}\in\mathbb{C}$\emph{, $\mbox{ev}_{y_{2}}:\mbox{Sol}_{n,d}^{v,y_{1}}\rightarrow A$}
defined by \emph{$\mbox{ev}_{y_{2}}\bigl(F(z)\bigr)=\frac{1}{y_{2}-y_{1}}F(y_{2})$}
is an isomorphism. For $v\neq0$, \emph{$\mbox{res}_{y_{1}}:\mbox{Sol}_{n,d}^{v,y_{1}}\rightarrow A$}
given by \emph{$\mbox{res}_{y_{1}}\bigl(F(z)\bigr)=F(y_{1})$} is
an isomorphism as well. 
\end{prop}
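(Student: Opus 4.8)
Throughout write $T:=[\,\cdot\,,J]+(y_1-v)\,\mathrm{id}$ for the linear operator on $A$, so that membership in $\mbox{Sol}_{n,d}^{v,y_1}$ reads $T(F_0)+F_\epsilon=0$, and keep in mind the block sizes $(n-d)\mid d$ induced by $J$.

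First I would compute $\dim W_{n,d}=2n^2$: the constant term is an arbitrary element of $A$, the degree-one coefficient is arbitrary away from its $(1,2)$ block, and the degree-two coefficient lives only in the $(2,1)$ block. The four blocks $W',X,Y'',Z'$ of $F_0$ are mutually independent parameters, so $F_0$ is an \emph{arbitrary} matrix in $A$; the three blocks $W,Y',Z$ of $F_\epsilon$ are further free parameters, while the constant block $Y$ occurs in neither $F_0$ nor $F_\epsilon$. Reading $T(F_0)=-F_\epsilon$ blockwise then shows it is equivalent to the conjunction of (i) the three equations expressing $W,Y',Z$ as the corresponding blocks of $-T(F_0)$, and (ii) the single block equation that the $(1,2)$ block of $T(F_0)$ vanish. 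Hence an element of $\mbox{Sol}_{n,d}^{v,y_1}$ is encoded by a matrix $F_0\in A$ subject to (ii), together with the free block $Y$.

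The $(1,2)$ block of $T(F_0)$ equals $\mathcal L(X)+W'J_2-J_2Z'$ with $\mathcal L(X)=XJ_3-J_1X+(y_1-v)X$. Since $J$ is strictly upper triangular, $J_1$ and $J_3$ are nilpotent, so $\mathcal L$ has the single eigenvalue $y_1-v$ and is invertible for $v\neq y_1$; thus (ii) is onto $\mbox{Mat}_{(n-d)\times d}(\mathbb C)$ and reduces the dimension of the $F_0$-space by exactly $(n-d)d$. Adding the $(n-d)d$ parameters of $Y$ gives $\dim\mbox{Sol}_{n,d}^{v,y_1}=n^2$. The boundary value $v=y_1$ is checked directly: the off-diagonal block $J_2$ arising in the recursion has full rank, so the term $W'J_2-J_2Z'$ alone keeps (ii) surjective.

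As $\dim\mbox{Sol}_{n,d}^{v,y_1}=n^2=\dim A$, it is enough to show $\mbox{ev}_{y_2}$ and $\mbox{res}_{y_1}$ are injective. For $F$ in the kernel one has $F(y_\ast)=0$ with $y_\ast=y_2$, resp. $y_\ast=y_1$; the $(1,2)$ block forces $X=0$, the $(2,1)$ evaluation merely fixes the free block $Y$, and the $(1,1)$ and $(2,2)$ evaluations combined with (i) give $W,Z$ a second expression. Eliminating $W,Z,Y',Y$ turns the kernel condition into the homogeneous system
\[
\lambda W'=[W',J_1]-J_2Y'',\qquad \lambda Z'=Y''J_2+[Z',J_3],\qquad W'J_2=J_2Z',
\]
in the blocks $W',Z',Y''$, where $\lambda=v$ for $\mbox{res}_{y_1}$ and $\lambda=y_2-y_1+v$ for $\mbox{ev}_{y_2}$. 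Packaging these into $\hat F=\left(\begin{smallmatrix}W'&0\\ Y''&Z'\end{smallmatrix}\right)$, the system says precisely that $\hat F$ has vanishing $(1,2)$ block while $[\hat F,J]-\lambda\hat F$ has only its $(2,1)$ block nonzero; it remains to prove that this forces $\hat F=0$ as soon as $\lambda\neq0$, which yields injectivity for $v\neq0$, resp.\ $y_2-y_1+v\neq0$.

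The heart of the matter is this last implication, and it is where I expect the real work to lie. Since $[\,\cdot\,,J]$ is nilpotent, $[\,\cdot\,,J]-\lambda$ is invertible for $\lambda\neq0$ and $\hat F=([\,\cdot\,,J]-\lambda)^{-1}L$ for a matrix $L$ supported in the $(2,1)$ block; the trouble is that $[\,\cdot\,,J]$ does not preserve the block-triangular filtration, so the leading coefficient of the pencil $\hat F\mapsto([\hat F,J]-\lambda\hat F)\bmod(2,1)$ is singular and the naive determinant argument degenerates. I would settle it by induction along the Euclidean sequence $(n-d,d)\to\cdots\to(1,1)$ defining $J$: the base case $J(1,1)=\left(\begin{smallmatrix}0&1\\ 0&0\end{smallmatrix}\right)$ is an immediate $2\times2$ check, and the block insertion $J(a,b)\rightsquigarrow J(p,q)$ should propagate triviality of the kernel from the smaller matrix to the larger one. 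Alternatively, one may identify $\mbox{Sol}_{n,d}^{v,y_1}$ and the maps $\mbox{ev}_{y_2},\mbox{res}_{y_1}$ with a space of sections and two restriction maps for the sheaves on the cuspidal cubic $E$ underlying the construction, whereupon both isomorphisms become a single cohomology-vanishing statement.
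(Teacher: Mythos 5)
The paper itself gives no proof of this proposition: it is imported verbatim from \cite[Section 10]{Burban}, where it is obtained by identifying $\mbox{Sol}_{n,d}^{v,y_1}$ with a space of sections of a simple vector bundle on the cuspidal cubic and $\mbox{res}_{y_1}$, $\mbox{ev}_{y_2}$ with evaluation maps --- essentially the second alternative you mention in your closing sentence. Judged on its own, the first half of your argument is correct and complete: $\dim W_{n,d}=2n^2$; the blocks of $F_0$ and of $F_\epsilon$ are independent parameters and $Y$ occurs in neither, so the defining equation amounts to solving for $W,Y',Z$ together with the single condition that the $(1,2)$ block $XJ_3-J_1X+(y_1-v)X+W'J_2-J_2Z'$ of $T(F_0)$ vanish; since $J_1,J_3$ are nilpotent this condition is surjective for $v\neq y_1$, and for $v=y_1$ as well because the top-level $J_2$ produced by the recursion is $(\mathds{1}\;\,0)$ or its transpose, hence of full rank. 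This yields $\dim\mbox{Sol}_{n,d}^{v,y_1}=n^2$, and your reduction of injectivity to the homogeneous system $\lambda W'=[W',J_1]-J_2Y''$, $\lambda Z'=Y''J_2+[Z',J_3]$, $W'J_2=J_2Z'$ with $\lambda=v$, resp.\ $\lambda=y_2-y_1+v$, is also correct.

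The gap is exactly where you say it is: you never prove that this system has only the trivial solution for $\lambda\neq0$, and neither of the two strategies you sketch (induction along the Euclidean sequence, or cohomology on $E$) is carried out. Everything before that point is bookkeeping, so the unproved claim is the entire content of the proposition and the proposal cannot count as a proof. One further point that your reduction correctly surfaces but that you should state explicitly: the condition you actually obtain for $\mbox{ev}_{y_2}$ is $y_2-y_1+v\neq0$, not $y_2\neq y_1$, and this is not an artifact of your method. Indeed $F(z)=(z-y_1+v)\,\mathds{1}$ always lies in $\mbox{Sol}_{n,d}^{v,y_1}$ (here $F_0=\mathds{1}$, $F_\epsilon=-(y_1-v)\mathds{1}$ and $[F_0,J]=0$), and it is annihilated by $\mbox{ev}_{y_2}$ at $y_2=y_1-v$; so the assertion about $\mbox{ev}_{y_2}$ can only hold for $y_2$ outside $\{y_1,\,y_1-v\}$ and must be read generically, consistent with $r_{(n,d)}$ being merely meromorphic.
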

We will assume $v\neq0$ and $y_{1}\neq y_{2}$ for the rest of this
section. Then we get a linear automorphism $\tilde{r}_{(n,d)}$ of
the matrix algebra $A$ given by the formula $\tilde{r}_{(n,d)}(v;y_{1},y_{2})=\mbox{ev}_{y_{2}}\circ\mbox{res}_{y_{1}}^{-1}$.\\

\noindent \textbf{$\star$ Step 3: definition of the tensor $r_{(n,d)}(v;y_{1},y_{2})$.}

\noindent Note that we have a canonical isomorphism of vector spaces\[
\mbox{can}:A\otimes A\rightarrow\mbox{End}_{\mathbb{C}}(A),\, X\otimes Y\mapsto\bigl(Z\mapsto\mbox{tr}(XZ)Y\bigr).\]
 For fixed $v,y_{1},y_{2}$, we set $r_{(n,d)}(v;y_{1},y_{2})=\mbox{can}^{-1}\bigl(\tilde{r}_{(n,d)}(v;y_{1},y_{2})\bigr)$.
The proof of the following theorem is contained in \cite[Section 10]{Burban}.

\begin{thm}
\label{thm: r(n,d) is unitary non-deg AYBE solution}The tensor-valued
function $r_{(n,d)}:\left(\mathbb{C}_{(v;y_{1},y_{2})}^{3},0\right)\rightarrow A\otimes A$
is a non-degenerate unitary solution of (\ref{eq:AYBE in 3 variables}).
Moreover $r_{(n,d)}(v;y_{1},y_{2})$ is holomorphic on $\bigl(\mathbb{C}^{3}\setminus V\bigl(v(y_{1}-y_{2})\bigr)\bigr)$.
\\
 
\end{thm}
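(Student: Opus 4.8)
The plan is to dispatch non-degeneracy and holomorphy immediately from Proposition \ref{pro:res_y1 : Sol -> sln iso} and the explicit shape of the construction, to reduce unitarity to a duality built into the block data of Steps 1--2, and to reserve the real work for the associative Yang--Baxter equation (\ref{eq:AYBE in 3 variables}), which I would obtain from the geometric origin of $r_{(n,d)}$ rather than by direct computation.

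\emph{Non-degeneracy and holomorphy.} By definition $\mbox{can}(r_{(n,d)}(v;y_1,y_2))=\tilde r_{(n,d)}(v;y_1,y_2)=\mbox{ev}_{y_2}\circ\mbox{res}_{y_1}^{-1}$. A tensor in $A\otimes A$ is non-degenerate precisely when its image under the isomorphism $\mbox{can}:A\otimes A\to\mbox{End}_{\mathbb{C}}(A)$ is an invertible endomorphism of $A$; since Proposition \ref{pro:res_y1 : Sol -> sln iso} asserts that both $\mbox{res}_{y_1}$ and $\mbox{ev}_{y_2}$ are isomorphisms when $v\neq0$ and $y_1\neq y_2$, their composite is an automorphism of $A$ and non-degeneracy follows at once. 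For holomorphy I would note that the defining relation $[F_0,J]+(y_1-v)F_0+F_\epsilon=0$ of $\mbox{Sol}_{n,d}^{v,y_1}\subset W_{n,d}$ has coefficients depending polynomially on $(v,y_1)$; as this kernel has constant dimension $n^2$, it defines a holomorphic family of points of a Grassmannian, and the restrictions of $\mbox{res}_{y_1}$ and $\mbox{ev}_{y_2}$ to it vary holomorphically. Cramer's rule then shows $\mbox{res}_{y_1}^{-1}$ is holomorphic wherever $\mbox{res}_{y_1}$ is invertible, i.e. for $v\neq0$, while the explicit factor $\tfrac{1}{y_2-y_1}$ in $\mbox{ev}_{y_2}$ is holomorphic for $y_1\neq y_2$. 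Hence $\tilde r_{(n,d)}$, and therefore $r_{(n,d)}$, is holomorphic on $\mathbb{C}^3\setminus V(v(y_1-y_2))$.

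\emph{Unitarity.} Under $\mbox{can}$ the flip $X\otimes Y\mapsto Y\otimes X$ corresponds to taking the adjoint with respect to the trace form $\langle A,B\rangle=\mbox{tr}(AB)$, since $\langle\mbox{can}(X\otimes Y)Z,W\rangle=\mbox{tr}(XZ)\mbox{tr}(YW)=\langle Z,\mbox{can}(Y\otimes X)W\rangle$. Thus unitarity is equivalent to the operator identity $\tilde r_{(n,d)}(v;y_1,y_2)=-\tilde r_{(n,d)}(-v;y_2,y_1)^{*}$, and I would establish it by exhibiting the self-duality present in the data: the partition of $J=J(n-d,d)$ and the shape of $W_{n,d}$ are compatible with the trace pairing, so that the substitution $v\mapsto-v$, $y_1\leftrightarrow y_2$ carries the defining relation of $\mbox{Sol}_{n,d}^{v,y_1}$ into its adjoint and exchanges the roles of $\mbox{res}$ and $\mbox{ev}$ up to sign. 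This is a concrete but routine computation with the block structure.

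\emph{The AYBE, which is the main obstacle.} Checking (\ref{eq:AYBE in 3 variables}) directly would mean controlling the triple composition $\mbox{ev}\circ\mbox{res}^{-1}$ across three parameter pairs, which is combinatorially forbidding. Instead I would use that, by construction, $r_{(n,d)}$ computes a triple Massey product in $\mbox{D}^b(\mbox{Coh}(E))$ for the cuspidal cubic $E=V(y^2z-x^3)$: the matrix $J$ records the nilpotent endomorphism of the relevant stable bundle, the equation $[F_0,J]+(y_1-v)F_0+F_\epsilon=0$ is the cocycle condition producing the product, and $\mbox{res}$, $\mbox{ev}$ are the evaluation maps entering it. The decisive point is that $E$ is Gorenstein with trivial dualizing sheaf, so $\mbox{D}^b(\mbox{Coh}(E))$ carries a cyclic (Calabi--Yau one) $A_\infty$-structure; the associative Yang--Baxter equation is then exactly the specialization of the $A_\infty$-associativity constraint to the Hom-complex spanned by the bundles attached to $y_1,y_2,y_3$, invoking Serre duality $\mbox{Ext}^1(\mathcal F,\mathcal G)\cong\mbox{Hom}(\mathcal G,\mathcal F)^{*}$ and the one-dimensionality of the relevant cohomology. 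This is Polishchuk's principle \cite{Polischuck2002} that triple Massey products on such curves solve the AYBE. The whole weight of the argument therefore rests on identifying the linear-algebra model of Steps 2--3 with the genuine Massey product and then invoking the general $A_\infty$-theorem; the identification is the technical heart and is carried out in \cite[Section 10]{Burban}.
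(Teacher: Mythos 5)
The paper itself offers no argument for this theorem: its entire ``proof'' is the sentence preceding the statement, deferring everything to \cite[Section 10]{Burban}. Since you also place the technical heart --- the identification of the linear-algebra model of Steps 1--3 with genuine triple Massey products on the cuspidal cubic, and hence the validity of (\ref{eq:AYBE in 3 variables}) --- on that same reference, your overall route ultimately coincides with the paper's. What you add beyond the paper is correct and worthwhile: non-degeneracy is indeed immediate from Proposition \ref{pro:res_y1 : Sol -> sln iso}, because under $\mbox{can}$ non-degeneracy of the tensor $r_{(n,d)}$ is exactly invertibility of $\tilde{r}_{(n,d)}=\mbox{ev}_{y_{2}}\circ\mbox{res}_{y_{1}}^{-1}$, a composite of isomorphisms; and your holomorphy argument is sound, since $\mbox{Sol}_{n,d}^{v,y_{1}}$ is the kernel of a linear map on $W_{n,d}$ depending polynomially on $(v,y_{1})$ with constant kernel dimension $n^{2}$, so it varies holomorphically, and Cramer's rule handles $\mbox{res}_{y_{1}}^{-1}$ on $\{v\neq0\}$ while the factor $\frac{1}{y_{2}-y_{1}}$ accounts for the locus $y_{1}=y_{2}$.

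The one place where you claim an independent argument but do not supply one is unitarity, and this is a genuine gap. Your reduction is fine: $\mbox{can}$ intertwines the flip with the trace-form adjoint, so unitarity amounts to $\tilde{r}_{(n,d)}(v;y_{1},y_{2})=-\tilde{r}_{(n,d)}(-v;y_{2},y_{1})^{*}$, which, after evaluating against $\mbox{res}$-images, becomes the concrete statement
\[
\mbox{tr}\bigl(F(y_{1})\, G(y_{1})\bigr)=\mbox{tr}\bigl(F(y_{2})\, G(y_{2})\bigr)\qquad\mbox{for all }F\in\mbox{Sol}_{n,d}^{v,y_{1}},\ G\in\mbox{Sol}_{n,d}^{-v,y_{2}},
\]
i.e.\ that these two subspaces, each of dimension $n^{2}$ inside the $2n^{2}$-dimensional space $W_{n,d}$, are orthogonal for the pairing $B(F,G)=\mbox{tr}\bigl(F(y_{1})G(y_{1})\bigr)-\mbox{tr}\bigl(F(y_{2})G(y_{2})\bigr)$. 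Asserting that the substitution $v\mapsto-v$, $y_{1}\leftrightarrow y_{2}$ ``carries the defining relation into its adjoint'' is a promissory note, not a proof: nothing visible in the relation $[F_{0},J]+(y_{1}-v)F_{0}+F_{\epsilon}=0$ exhibits this orthogonality, the polynomial $\mbox{tr}\bigl(F(z)G(z)\bigr)$ has degree up to four, and $F_{0}$, $F_{\epsilon}$ mix coefficients of different powers of $z$ blockwise, so the required identity is a nontrivial consequence of both defining relations. Indeed, in \cite{Burban} unitarity is not obtained by such a direct block computation but from the same geometric duality on $E$ that underlies the Massey-product interpretation. So either carry out the orthogonality computation in full, or cite \cite[Section 10]{Burban} for unitarity as well --- as the paper implicitly does --- in which case your proposal reduces to the paper's own proof plus correct supplementary detail on the easy claims.
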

\begin{example}
\emph{ \label{exa: examples for r_2,1 and r_3,1}For any $n\in\mathbb{N}$,
let $P=\sum_{1\leq i,j\leq n}e_{ij}\otimes e_{ji}\in A\otimes A$.\emph{
}\\
\emph{ }\\
\emph{ } i) Let $(n,d)=(2,1)$. Then we have }

\noindent \emph{\[
r_{(2,1)}(v;y_{1},y_{2})=\frac{1}{2v}\mathds{1}\otimes\mathds{1}+\frac{1}{y_{2}-y_{1}}P+\]
 \[
+\left(v-y_{1}\right)e_{21}\otimes\check{h}+\left(v+y_{2}\right)\check{h}\otimes e_{21}-\frac{v(v-y_{1})(v+y_{2})}{2}e_{21}\otimes e_{21},\]
 } \emph{where $\check{h}=\mbox{diag}(\frac{1}{2},-\frac{1}{2})$.\emph{} }\\

\noindent \emph{ ii) Let $(n,d)=(3,1)$. Then we have\emph{ \[
r_{(3,1)}(v;y_{1},y_{2})=\frac{1}{3v}\mathds{1}\otimes\mathds{1}+\frac{1}{y_{2}-y_{1}}P-\]
 \[
-e_{21}\otimes\check{h}_{1}+\check{h}_{1}\otimes e_{21}+e_{32}\otimes e_{12}-e_{12}\otimes e_{32}-y_{1}e_{32}\otimes\check{h}_{2}+y_{2}\check{h}_{2}\otimes e_{32}+\]
 \[
+(v-y_{1})e_{31}\otimes e_{12}+(v+y_{2})e_{12}\otimes e_{31}+ve_{32}\otimes(e_{11}-e_{33})+v(e_{11}-e_{33})\otimes e_{32}+\]
 \[
+\frac{1}{3}v(y_{1}-3v)e_{32}\otimes e_{21}+\frac{1}{3}v(y_{2}+3v)e_{21}\otimes e_{32}+v(v-y_{1})e_{31}\otimes\check{h}_{1}-\]
 \[
-v(v+y_{2})\check{h}_{1}\otimes e_{31}+\frac{2}{3}v^{2}(y_{1}-v)e_{31}\otimes e_{21}-\frac{2}{3}v^{2}(y_{2}+v)e_{21}\otimes e_{31}+\]
 \[
+\frac{1}{3}v^{2}(y_{2}+v)(3v-y_{1})e_{32}\otimes e_{31}+\frac{1}{3}v^{2}(y_{1}+v)(3v+y_{2})e_{31}\otimes e_{32}+\]
 \[
+\frac{2}{3}ve_{21}\otimes e_{21}+\frac{2}{3}v^{3}(v-y_{1})(v+y_{2})e_{31}\otimes e_{31}\]
 \[
+\frac{1}{3}v(-6v^{2}+3v(y_{1}-y_{2})+2y_{1}y_{2})e_{32}\otimes e_{32},\]
 }where $\check{h}_{1}=\mbox{diag}(\frac{2}{3},-\frac{1}{3},-\frac{1}{3})$
and $\check{h}_{2}=\mbox{diag}(\frac{1}{3},\frac{1}{3},-\frac{2}{3})$.\emph{
 } }
\end{example}
\noindent Next, recall \cite[Lemma 2.11]{Burban}, which is based
upon \cite[Lemma 1.2]{Polischuck2002}:

\begin{lem}
\emph{\label{lem: AYBE unitary ->CYBE unitary }} Let $r(v;y_{1},y_{2})$
be a unitary solution of (\ref{eq:AYBE in 3 variables}) of the form
(\ref{eq:Laurent assumption}). Then \emph{$\bar{r}_{0}(y_{1},y_{2})=(\mbox{pr}\otimes\mbox{pr})\left(r_{0}(y_{1},y_{2})\right)$}
is a unitary solution of the CYBE (\ref{eq:CYBE}). 
\end{lem}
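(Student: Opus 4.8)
The plan is to substitute the Laurent ansatz (\ref{eq:Laurent assumption}) into the AYBE (\ref{eq:AYBE in 3 variables}) and read off the equation that governs the quadratic term. Since $r$ carries simple poles along $u=0$, $v=0$ and $u+v=0$, there is no unambiguous bivariate Laurent expansion at the origin, so I would extract the relevant coefficient by iterated limits. First fix $v\neq 0$ and expand both sides in $u$ about $u=0$: the $u^{-1}$-coefficients agree automatically (both equal $r^{23}(v;y_2,y_3)$), and comparing the $u^{0}$-coefficients gives the one-parameter identity
\[
\partial_{v}r^{23}(v;y_2,y_3)+r_0^{12}(y_1,y_2)\,r^{23}(v;y_2,y_3)=r^{13}(v;y_1,y_3)\,r^{12}(-v;y_1,y_2)+r^{23}(v;y_2,y_3)\,r_0^{13}(y_1,y_3),
\]
the $\partial_v$-term arising from the Taylor expansion of $r^{23}(u+v)$ in $u$. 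Expanding this in $v$ about $v=0$, the $v^{-2}$- and $v^{-1}$-coefficients match identically, and the $v^{0}$-coefficient produces a relation of the schematic shape
\[
r_1^{12}+r_1^{13}+r_1^{23}=r_0^{13}r_0^{12}+r_0^{23}r_0^{13}-r_0^{12}r_0^{23},
\]
the arguments $(y_1,y_2),(y_1,y_3),(y_2,y_3)$ being distributed in the evident way.

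The heart of the argument is to eliminate the unwanted next-order coefficients $r_1^{ij}$. Expanding the unitarity condition in the spectral parameter yields $r_0(y_1,y_2)=-r_0^{21}(y_2,y_1)$ and, one order higher, $r_1(y_1,y_2)=r_1^{21}(y_2,y_1)$; thus $r_0$ is skew and $r_1$ is symmetric under the simultaneous flip of the two tensor legs together with the swap of the two parameters. I would then apply to the relation above the involution $\Phi$ that reverses the first and third tensor factors and interchanges $y_1$ and $y_3$. As an algebra automorphism of $A^{\otimes 3}$, $\Phi$ fixes the left-hand side, since the symmetry of $r_1$ permutes the three $r_1$-terms among themselves; by the skewness of $r_0$, the two signs produced on each factor cancel, so $\Phi$ sends the right-hand side to the same three bilinear terms written in the opposite multiplicative order. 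Since $\Phi$ fixes the left-hand side, the right-hand side equals its own $\Phi$-image, and this identity is precisely
\[
[r_0^{12},r_0^{13}]+[r_0^{12},r_0^{23}]+[r_0^{13},r_0^{23}]=0,
\]
i.e. the CYBE already holds for $r_0$ itself.

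It then remains to descend to $\bar{r}_0=(\mbox{pr}\otimes\mbox{pr})(r_0)$. The key observation is that for each pair of index sets sharing exactly one slot one has $(\mbox{pr}^{\otimes 3})[r_0^{ij},r_0^{kl}]=[\bar{r}_0^{ij},\bar{r}_0^{kl}]$: the commutator lives entirely in the shared tensor slot, where it is automatically traceless and hence unaffected by $\mbox{pr}$, while on the two spectator slots $\mbox{pr}$ agrees with the projection already present in $\bar{r}_0$. Applying $\mbox{pr}^{\otimes 3}$ to the displayed CYBE for $r_0$ therefore yields the CYBE (\ref{eq:CYBE}) for $\bar{r}_0$. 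Unitarity of $\bar{r}_0$ is immediate, since $\mbox{pr}\otimes\mbox{pr}$ commutes with the flip of tensor factors, so that $\bar{r}_0(y_1,y_2)=-\bar{r}_0^{21}(y_2,y_1)$ follows from the corresponding identity for $r_0$.

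The main obstacle is the two-parameter bookkeeping of the first step: because of the pole along $u+v=0$ the extraction must be organised as an iterated limit rather than a genuine double expansion, and one must verify that the lower-order coefficients ($u^{-1}$, and $v^{-2},v^{-1}$) cancel so that the $v^{0}$-relation is well posed. Conceptually, the crux is that the classical equation does not appear at leading order but is entangled with the coefficient $r_1$; the whole argument hinges on the fact that unitarity renders $r_1$ symmetric and $r_0$ skew, so that a single symmetrisation via $\Phi$ removes $r_1$ entirely and the remaining bilinear terms assemble into the three commutators of (\ref{eq:CYBE}).
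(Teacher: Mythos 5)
Your proof is correct. Note that the paper itself offers no proof of this lemma --- it is recalled from \cite[Lemma 2.11]{Burban}, ultimately \cite[Lemma 1.2]{Polischuck2002} --- so the comparison is with that standard argument, and yours is essentially it. The relation you extract from the iterated $u$- then $v$-expansion of the AYBE, namely $r_{1}^{12}+r_{1}^{13}+r_{1}^{23}=r_{0}^{13}r_{0}^{12}+r_{0}^{23}r_{0}^{13}-r_{0}^{12}r_{0}^{23}$, is exactly the image under your involution $\Phi$ of the paper's own identity (\ref{eq:r1 determined by r0}), which the paper derives instead from the dual equation (\ref{eq:Dual AYBE in 3 variables}); since the dual equation is itself the unitarity-transform of the AYBE, deriving one relation and applying $\Phi$ is the same as deriving both relations and subtracting, and either way the $r_{1}$-terms cancel and the three commutators of (\ref{eq:CYBE}) remain. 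Your iterated-limit bookkeeping checks out (the $u^{-1}$, $v^{-2}$ and $v^{-1}$ coefficients do cancel as claimed), as does the symmetry/skewness of $r_{1}$ and $r_{0}$ under unitarity. The only point I would tighten is the justification of $(\mbox{pr}\otimes\mbox{pr}\otimes\mbox{pr})\left[r_{0}^{ij},r_{0}^{kl}\right]=\left[\bar{r}_{0}^{ij},\bar{r}_{0}^{kl}\right]$: the clean reason is that, writing $r_{0}=\bar{r}_{0}+\alpha\otimes\mathds{1}+\mathds{1}\otimes\beta+c\,\mathds{1}\otimes\mathds{1}$, every cross term in the commutator either vanishes outright (disjoint or central slots) or carries $\mathds{1}$ in a spectator slot and is annihilated there because $\mbox{pr}(\mathds{1})=0$; the statement is true, but ``traceless in the shared slot'' alone does not account for the $\alpha\otimes\mathds{1}$-type contributions.
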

\noindent Combining this result with example (\ref{exa: examples for r_2,1 and r_3,1}),
we derive that $r_{(2,1)}$ and $r_{(3,1)}$ induce solutions of of
the CYBE (\ref{eq:CYBE}). Let us denote these by $c_{(2,1)}$ and
$c_{(3,1)}$ respectively. Also, let $\Omega$ be the Casimir element
of $\mathfrak{sl}_{n}(\mathbb{C})\otimes\mathfrak{sl}_{n}(\mathbb{C})$
with respect to the trace form $(x,y)\mapsto\mbox{tr}(x\cdot y)$
\[
\Omega=\sum_{1\leq i\neq j\leq n}e_{i,j}\otimes e_{j,i}+\sum_{1\leq l\leq n-1}h_{l}\otimes\check{h}_{l}.\]
 Observing that $\left(\mbox{pr}\otimes\mbox{pr}\right)\left(P\right)=\Omega$,
we derive the following formulae:

\[
c_{(2,1)}(y_{1},y_{2})=\frac{\Omega}{y_{2}-y_{1}}+y_{2}\check{h}\otimes e_{21}-y_{1}e_{21}\otimes\check{h}\in\mathfrak{sl}_{2}(\mathbb{C})\otimes\mathfrak{sl}_{2}(\mathbb{C})\]
 and\[
c_{(3,1)}(y_{1},y_{2})=\frac{\Omega}{y_{2}-y_{1}}+y_{2}\check{h}_{2}\otimes e_{32}-y_{1}e_{32}\otimes\check{h}_{2}+y_{2}e_{12}\otimes e_{31}-y_{1}e_{31}\otimes e_{12}-\]
 \[
-e_{21}\otimes\check{h}_{1}+\check{h}_{1}\otimes e_{21}+e_{32}\otimes e_{12}-e_{12}\otimes e_{32}\in\mathfrak{sl}_{3}(\mathbb{C})\otimes\mathfrak{sl}_{3}(\mathbb{C}).\]
 It can be verified that neither $c_{(2,1)}$ nor $c_{(3,1)}$ has
any infinitesimal symmetries. Thus theorem \ref{thm: generalizing Polishchuk}
1) yields that for fixed $v_{0}\in\mathbb{C}^{\times}$, both $r_{(2,1)}(v_{0};y_{1},y_{2})$
and $r_{(3,1)}(v_{0};y_{1},y_{2})$ satisfy the QYBE (\ref{eq:QYBE}).

\section{Gauge equivalence and general results on the AYBE}

In this section we explain the notion of gauge equivalence and collect
some useful results for solutions of (\ref{eq:AYBE in 3 variables}).

In order to deal with gauge equivalences in a correct way, we have
to consider a more general form of the AYBE in four variables\begin{equation}
\begin{array}{c}
r^{12}(v_{1},v_{2};y_{1},y_{2})\, r^{23}(v_{1},v_{3};y_{2},y_{3})=\\
=r^{13}(v_{1},v_{3};y_{1},y_{3})\, r^{12}(v_{3},v_{2};y_{1},y_{2})+r^{23}(v_{2},v_{3};y_{2},y_{3})\, r^{13}(v_{1},v_{2};y_{1},y_{3})\end{array}\label{eq:AYBE in 4 var}\end{equation}
 where $r$ now denotes the germ of a meromorphic function $r:\left(\mathbb{C}^{4},0\right)\rightarrow A\otimes A$.
From the point of view of algebraic geometry, (\ref{eq:AYBE in 4 var})
is more natural than (\ref{eq:AYBE in 3 variables}), see \cite{Burban}.
Note that solutions of (\ref{eq:AYBE in 3 variables}) are just solutions
of (\ref{eq:AYBE in 4 var}) depending on the difference of the first
pair of spectral parameters $r(v_{1},v_{2};y_{1},y_{2})=r(v_{1}-v_{2};y_{1},y_{2})=r(v;y_{1},y_{2})$.
As to the definition of gauge equivalence, this is given as follows:

\begin{defn}
\emph{ \label{def:gauge equivalence on AYBE}Let $\phi:(\mathbb{C}^{2},0)\rightarrow\mbox{GL}_{n}(\mathbb{C})$
be the germ of a holomorphic function and let $r(v_{1},v_{2};y_{1},y_{2})$
be a solution of (\ref{eq:AYBE in 4 var}). Then the tensor valued
function \[
r'(v_{1},v_{2};y_{1},y_{2})=\Bigl(\phi(v_{1};y_{1})\otimes\phi(v_{2};y_{2})\Bigr)\, r(v_{1},v_{2};y_{1},y_{2})\,\Bigl(\phi^{-1}(v_{2};y_{1})\otimes\phi^{-1}(v_{1};y_{2})\Bigr)\]
 is also a solution of (\ref{eq:AYBE in 4 var}). The solutions $r$
and $r'$ are said to be gauge equivalent and $\phi$ is called a
gauge transformation.\emph{ }}
\end{defn}
\begin{example}
\emph{ \label{exa: example for gauge equivalence}Let $r(v_{1},v_{2};y_{1},y_{2})\in A\otimes A$
be a solution of (\ref{eq:AYBE in 4 var}), $c\in\mathbb{C}$ and
$\phi(v,y)=\exp(cvy)\cdot\mathds{1}:(\mathbb{C}^{2},0)\rightarrow\mbox{GL}_{n}(\mathbb{C})$
be a gauge transformation. Then \[
\exp\Bigl(c\left(v_{2}-v_{1}\right)\left(y_{2}-y_{1}\right)\Bigr)\, r(v_{1},v_{2};y_{1},y_{2})\]
 is a solution of (\ref{eq:AYBE in 4 var}), gauge equivalent to $r$.Ê}

\emph{ Similarly, assume that $r(v_{1},v_{2};y_{1},y_{2})=r(v;y_{1},y_{2})$
is a solution of (\ref{eq:AYBE in 4 var}) which depends only on $v=v_{1}-v_{2},y_{1},y_{2}$.
Thus $r(v;y_{1},y_{2})$ is a solution of (\ref{eq:AYBE in 3 variables}).
Consider the gauge transformation $\phi(v,y)=\exp(vg(y))\cdot\mathds{1}:(\mathbb{C}^{2},0)\rightarrow\mbox{GL}_{n}(\mathbb{C})$
for some holomorphic function $g:\mathbb{C}\rightarrow\mathbb{C}$.
Then $r'(v;y_{1},y_{2})=\exp\bigl(v\left(g(y_{2})-g(y_{1})\right)\bigr)\, r(v;y_{1},y_{2})$
is a solution of (\ref{eq:AYBE in 3 variables}) as well.\emph{ }}
\end{example}
\noindent In the remainder of this section, we list some basic results
on solutions of the AYBE which we shall need in the next sections.

\begin{lem}
\emph{\label{'Dual' AYBE }\cite[Lemma 2.7]{Burban}} Let $r(v_{1},v_{2};y_{1},y_{2})$
be a unitary solution of (\ref{eq:AYBE in 4 var}). Then writing $r^{ij}(v_{1},v_{2})$
as short-hand for $r^{ij}(v_{1},v_{2};y_{i},y_{j})$, $r$ also satisfies
the {}``dual equation''\[
r^{23}(v_{2},v_{3})\, r^{12}(v_{1},v_{3})=r^{12}(v_{1},v_{2})\, r^{13}(v_{2},v_{3})+r^{13}(v_{1},v_{3})\, r^{23}(v_{2},v_{1}).\]

\end{lem}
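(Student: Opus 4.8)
The plan is to derive the dual equation purely from the two symmetries built into the four-variable AYBE \eqref{eq:AYBE in 4 var}: the unitarity relation, and the freedom to relabel the index $1\leftrightarrow 3$. As a preliminary step I would record unitarity in the short-hand notation. Under $r(v_1,v_2;\cdot)=r(v_1-v_2;\cdot)$ the three-variable condition $r^{12}(v;y_1,y_2)=-r^{21}(-v;y_2,y_1)$ becomes $r(v_1,v_2;y_1,y_2)=-\tau\bigl(r(v_2,v_1;y_2,y_1)\bigr)$, where $\tau$ is the flip of the two tensor factors. Embedding into positions $i,j$ and remembering that $\tau$ interchanges which slot receives the first and the second factor, this reads as the clean relation $r^{ij}(\alpha,\beta)=-r^{ji}(\beta,\alpha)$, valid for all $i\neq j$ and all spectral arguments.

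With this in hand, the first move is to apply unitarity to each of the six tensors occurring in \eqref{eq:AYBE in 4 var}. Substituting $r^{12}(v_1,v_2)\mapsto -r^{21}(v_2,v_1)$, $r^{23}(v_1,v_3)\mapsto -r^{32}(v_3,v_1)$, and so on, each product on both sides picks up two minus signs, which cancel. This produces a reversed-index identity of exactly the same shape as \eqref{eq:AYBE in 4 var} but written in the indices $21,32,31$:
\[
r^{21}(v_2,v_1)\,r^{32}(v_3,v_1)=r^{31}(v_3,v_1)\,r^{21}(v_2,v_3)+r^{32}(v_3,v_2)\,r^{31}(v_2,v_1).
\]

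The second move is to push this identity through the relabeling that interchanges the index $1$ and the index $3$. Concretely I would conjugate the whole identity in $A^{\otimes 3}$ by the operator $P_{13}\colon a\otimes b\otimes c\mapsto c\otimes b\otimes a$ and simultaneously rename the free parameters $v_1\leftrightarrow v_3$ and $y_1\leftrightarrow y_3$. Conjugation by $P_{13}$ is an algebra automorphism of $A^{\otimes 3}$, so it sends a valid operator identity to a valid one and turns $r^{ij}$ into $r^{\sigma(i)\sigma(j)}$ for $\sigma=(1\,3)$; the renaming of the free parameters is legitimate because \eqref{eq:AYBE in 4 var} holds identically in all of them. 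Tracking the effect term by term (for instance $r^{21}(v_2,v_1)\mapsto r^{23}(v_2,v_3)$ and $r^{32}(v_3,v_1)\mapsto r^{12}(v_1,v_3)$) carries the reversed-index identity precisely into the asserted dual equation, up to the harmless reordering of the two summands on the right.

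The step I expect to demand the most care is the bookkeeping in this second move, rather than any genuine difficulty. Because the short-hand $r^{ij}(\alpha,\beta)$ silently attaches the geometric parameters $y_i,y_j$ to the tensor positions, I must check that conjugation by $P_{13}$ — which transports a factor from position $1$ to position $3$ but leaves its geometric argument unchanged — is exactly compensated by the renaming $y_1\leftrightarrow y_3$, so that each resulting tensor again carries the geometric parameters dictated by its new position. Verifying this compatibility factor by factor, together with the parallel check for the spectral arguments $v_1\leftrightarrow v_3$, is the only place where a slip could occur; once the dictionary is fixed, the identification with the dual equation is immediate.
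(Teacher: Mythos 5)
Your argument is correct: the derivation of the dual equation by first rewriting every factor of \eqref{eq:AYBE in 4 var} via the unitarity relation $r^{ij}(\alpha,\beta)=-r^{ji}(\beta,\alpha)$ (the signs cancel in pairs) and then conjugating by the permutation $1\leftrightarrow 3$ while renaming $v_1\leftrightarrow v_3$, $y_1\leftrightarrow y_3$ checks out factor by factor, including the bookkeeping of the geometric parameters that you rightly flag as the delicate point. The paper itself gives no proof of this lemma --- it is quoted from \cite[Lemma 2.7]{Burban} --- but your symmetry argument is the standard derivation used there, so there is nothing to object to.
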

\begin{cor}
If $r(v;y_{1},y_{2})$ is a unitary solution of (\ref{eq:AYBE in 3 variables}),
then we also have\begin{equation}
\begin{array}{c}
{\displaystyle \frac{{}}{{}}}r^{23}(u+v;y_{2},y_{3})\, r^{12}(u;y_{1},y_{2})=\\
{\displaystyle \frac{{}}{{}}}=r^{12}(-v;y_{1},y_{2})\, r^{13}(u+v;y_{1},y_{3})+r^{13}(u;y_{1},y_{3})\, r^{23}(v;y_{2},y_{3}).\end{array}\label{eq:Dual AYBE in 3 variables}\end{equation}

\end{cor}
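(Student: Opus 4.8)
The plan is to derive equation (\ref{eq:Dual AYBE in 3 variables}) as a direct specialization of the dual equation in Lemma \ref{'Dual' AYBE }. Since $r(v;y_1,y_2)$ is a unitary solution of (\ref{eq:AYBE in 3 variables}), it corresponds to a solution of the four-variable equation (\ref{eq:AYBE in 4 var}) that depends only on the difference $v=v_1-v_2$ of the first pair of spectral parameters, as noted in the paragraph following Definition \ref{def:gauge equivalence on AYBE}. So first I would invoke Lemma \ref{'Dual' AYBE }, writing out its conclusion
\[
r^{23}(v_{2},v_{3})\, r^{12}(v_{1},v_{3})=r^{12}(v_{1},v_{2})\, r^{13}(v_{2},v_{3})+r^{13}(v_{1},v_{3})\, r^{23}(v_{2},v_{1}),
\]
and then substitute the difference-dependence $r^{ij}(v_i,v_j)=r^{ij}(v_i-v_j)$ into every factor.

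The core of the argument is then the correct choice of substitution for the spectral parameters $v_1,v_2,v_3$ so that the six differences appearing match the arguments in (\ref{eq:Dual AYBE in 3 variables}). Reading off the target equation, the left-hand side wants arguments $u+v$ and $u$, and the right-hand side wants $-v$, $u+v$, $u$, and $v$. I would set, for instance, $v_1=u+v$, $v_2=0$, $v_3=-u$ (or an equivalent shift), and check each of the six factors: $r^{23}(v_2-v_3)=r^{23}(u)$ should become $r^{23}(u+v)$, so I expect the precise assignment needs care. The clean way is to solve the resulting linear system: I want $v_2-v_3=u+v$ and $v_1-v_3=u$ on the left, and $v_1-v_2=-v$, $v_2-v_3=u+v$, $v_1-v_3=u$, $v_2-v_1=v$ on the right. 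These are consistent (note $v_1-v_2=-v$ and $v_2-v_1=v$ agree automatically, and $(v_2-v_3)-(v_1-v_3)=v_2-v_1=v$ is forced by $u+v-u=v$), so a one-parameter family of substitutions works; choosing $v_3=0$ gives $v_1=u$, $v_2=u+v$.

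Having fixed the substitution, I would verify that each of the three products transforms into the corresponding product of (\ref{eq:Dual AYBE in 3 variables}) term by term, taking care that the superscripts $12,13,23$ are preserved under the substitution since the positional embeddings $\rho_{ij}$ are unaffected by which numerical values the $v_i$ take. The main obstacle, though entirely routine, is bookkeeping: ensuring the sign and ordering conventions in Lemma \ref{'Dual' AYBE } align with those in the target, and in particular that the factor $r^{23}(v_2,v_1)$ correctly produces $r^{23}(v;y_2,y_3)$ under the chosen specialization. No unitarity manipulation beyond what Lemma \ref{'Dual' AYBE } already encodes should be needed, since the lemma has already performed the essential use of the unitarity condition; this corollary is purely the restriction of that lemma to difference-dependent solutions.
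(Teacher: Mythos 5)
Your proof is correct and matches the paper's (implicit) argument: the corollary is exactly the specialization of the dual equation of Lemma \ref{'Dual' AYBE } to solutions depending only on the difference $v_1-v_2$, and your final substitution $v_3=0$, $v_1=u$, $v_2=u+v$ makes all six arguments match. Nothing further is needed.
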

\noindent The proof of the next lemma is essentially contained in
the proof of \cite[Theorem 5]{Polischuck2002}.

\begin{lem}
\label{lem: r uniquely determined by r_0,r_1}Let $r(v;y_{1},y_{2})$
be a unitary solution of (\ref{eq:AYBE in 3 variables}) of the form
(\ref{eq:Laurent assumption}). Then $r$ is uniquely determined by
$r_{0}$ and $r_{1}$. Moreover, we have\begin{equation}
\begin{array}{c}
{\displaystyle \frac{{}}{{}}}r_{1}^{12}(y_{1},y_{2})+r_{1}^{13}(y_{1},y_{3})+r_{1}^{23}(y_{2},y_{3})=\\
{\displaystyle \frac{{}}{{}}}=r_{0}^{12}(y_{1},y_{2})\, r_{0}^{13}(y_{1},y_{3})-r_{0}^{23}(y_{2},y_{3})\, r_{0}^{12}(y_{1},y_{2})+r_{0}^{13}(y_{1},y_{3})\, r_{0}^{23}(y_{2},y_{3}).\end{array}\label{eq:r1 determined by r0}\end{equation}

\end{lem}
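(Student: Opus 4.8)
The plan is to prove both claims of Lemma~\ref{lem: r uniquely determined by r_0,r_1} by substituting the Laurent Ansatz \eqref{eq:Laurent assumption} into the AYBE \eqref{eq:AYBE in 3 variables} and comparing coefficients of the various powers of $u$ and $v$. Writing $r(w;y_i,y_j)=\frac{\mathds{1}\otimes\mathds{1}}{w}+\sum_{k\geq 0}w^k\,r_k(y_i,y_j)$ in each of the three tensor slots, the left- and right-hand sides of \eqref{eq:AYBE in 3 variables} become Laurent series in the two independent variables $u$ and $v$. Since $r$ is assumed to actually solve the AYBE, every coefficient in this double expansion must match. The key point is that the term $\frac{\mathds{1}\otimes\mathds{1}}{w}$ is central, so the factors $\frac{1}{u}$, $\frac{1}{u+v}$, $\frac{1}{v}$ coming from the leading parts produce, upon multiplication, terms that shift the indices of the $r_k$ in a controlled way. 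I would organize the comparison so that each equation expresses a higher $r_k$ (or the full symmetrized combination of $r_1$'s) in terms of lower data.

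To obtain the explicit formula \eqref{eq:r1 determined by r0}, I would isolate the coefficient of $u^0 v^0$ (the constant term in both $u$ and $v$). On the left side, $r^{12}(u)r^{23}(u+v)$ contributes a $\frac{1}{u}\cdot\frac{1}{u+v}$ singular piece, a cross term $\frac{1}{u}\,\mathds{1}\otimes r_0^{23}$ etc., and the genuinely constant pieces including $r_1$'s paired against the central leading terms; carefully, the coefficient of $u^0v^0$ on the left picks up $r_1^{12}+r_1^{23}$ (from pairing each $w\,r_1$ against the central $\frac{1}{w}$ of the opposite factor, after expanding $\frac{1}{u+v}$) together with the product $r_0^{12}\,r_0^{23}$. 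The right side, by the same bookkeeping applied to $r^{13}(u+v)r^{12}(-v)$ and $r^{23}(v)r^{13}(u)$, yields $r_1^{13}$ plus products of the $r_0$'s with the appropriate signs coming from the argument $-v$. Collecting everything and moving the $r_1$ terms to one side produces exactly \eqref{eq:r1 determined by r0}; the signs and the specific ordering of the three products $r_0^{12}r_0^{13}$, $r_0^{23}r_0^{12}$, $r_0^{13}r_0^{23}$ are dictated by which slots the singular $\frac1w$ factors occupy.

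For the uniqueness statement, I would extract the coefficient of $u^0v^m$ (or, symmetrically, a coefficient that isolates the highest-order unknown) and show it takes the recursive shape
\[
r_{m+1}^{12}+r_{m+1}^{23}-r_{m+1}^{13}=\bigl(\text{polynomial in }r_0,\dots,r_m\bigr),
\]
so that $r_{m+1}$ in the three slots is constrained by the already-determined lower-order tensors. The unitarity hypothesis is what upgrades these three-slot constraints into a genuine determination of each $r_k$ individually: using $r^{12}(v)=-r^{21}(-v)$, which at the level of coefficients reads $r_k^{12}(y_1,y_2)=(-1)^{k+1}r_k^{21}(y_2,y_1)$, one can disentangle the symmetric combination appearing on the left-hand side. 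Concretely, I would feed the coefficient identities into both the AYBE and its dual \eqref{eq:Dual AYBE in 3 variables}, and use the non-degeneracy together with the evaluation/residue isomorphisms in the spirit of Proposition~\ref{pro:res_y1 : Sol -> sln iso} to solve for $r_{m+1}$ uniquely from $r_0,r_1$ by induction.

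The main obstacle I anticipate is the combinatorial bookkeeping of which coefficient of $u^a v^b$ encodes which $r_k$: because the central leading term sits in every slot and because $\frac{1}{u+v}$ must be re-expanded as a geometric series, a single tensor $r_k$ contributes to infinitely many coefficients, and one must choose the extraction order so that the recursion closes rather than coupling back to higher-order unknowns. Getting the signs right in the $-v$ argument of $r^{12}(-v)$ and in the unitarity relation is the delicate part, but these are mechanical once the extraction scheme is fixed; the conceptual content is simply that matching the $u^0v^0$ term gives \eqref{eq:r1 determined by r0}, and matching the remaining coefficients gives a triangular recursion that, together with unitarity, pins down all $r_k$ from $r_0$ and $r_1$.
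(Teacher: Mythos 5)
You have the right starting point---insert the Laurent expansion into the AYBE (or its dual \eqref{eq:Dual AYBE in 3 variables}) and compare coefficients---and your reading of the lowest-order terms does produce \eqref{eq:r1 determined by r0}, essentially as in the paper. But the uniqueness half of your proposal has two genuine gaps. The first is the extraction scheme itself: you propose to read off coefficients of $u^{a}v^{b}$ after re-expanding $\frac{1}{u+v}$ as a geometric series, and you correctly flag that this couples each $r_{k}$ to infinitely many coefficients---but you never resolve the problem, and indeed ``the coefficient of $u^{0}v^{m}$'' is not even well defined without choosing a region of expansion ($|u|>|v|$ versus $|v|>|u|$). The paper's way out is \emph{not} to expand $\frac{1}{u},\frac{1}{v},\frac{1}{u+v}$ at all, but to group the identity by total homogeneous degree in $(u,v)$: the degree-$(k-1)$ part reads
\[
r_{k}^{12}\left[\frac{(-v)^{k}-u^{k}}{u+v}\right]+r_{k}^{13}\left[\frac{u^{k}-(u+v)^{k}}{v}\right]+r_{k}^{23}\left[\frac{v^{k}-(u+v)^{k}}{u}\right]=\bigl(\text{products of }r_{i},\ i<k\bigr),
\]
and for $k>2$ the three rational functions of $(u,v)$ multiplying the unknowns are linearly independent, so each slot of $r_{k}$ is determined separately by lower-order data. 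That linear-independence computation is the actual content of the induction and is missing from your proposal.

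The second gap is the shape of your inductive step. A single relation $r_{m+1}^{12}+r_{m+1}^{23}-r_{m+1}^{13}=(\text{lower order})$ does not determine $r_{m+1}$: the map $\alpha\mapsto\alpha^{12}(y_{1},y_{2})+\alpha^{23}(y_{2},y_{3})-\alpha^{13}(y_{1},y_{3})$ has a nontrivial kernel, and ruling it out is precisely the delicate point the paper must handle separately in the proof of theorem \ref{thm:recovering AYBE sol. from their CYBE}, where it needs extra hypotheses (symmetry of $\alpha$ and absence of poles on the diagonal, the latter resting on non-degeneracy---which the present lemma does not assume). Your suggestion that unitarity ``disentangles'' the three slots is not substantiated: the coefficient-level identity $r_{k}^{12}(y_{1},y_{2})=(-1)^{k+1}r_{k}^{21}(y_{2},y_{1})$ relates $r_{k}$ to itself and does not separate the three summands. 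Finally, $k=2$ is genuinely exceptional: the three coefficient functions above degenerate to only two independent linear forms, yielding $-u\,(r_{2}^{12}+2r_{2}^{13}+r_{2}^{23})+v\,(r_{2}^{12}-r_{2}^{13}-2r_{2}^{23})$, and one must observe that the resulting combination $r_{2}^{12}(y_{1},y_{2})-r_{2}^{23}(y_{2},y_{3})$ still pins down $r_{2}$; this is exactly why the lemma asserts determination by $r_{0}$ \emph{and} $r_{1}$, and your proposal does not treat this case.
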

\begin{proof}
First we show that $r$ is uniquely determined by $r_{0},r_{1}$ and
$r_{2}$. To this end, we fix $k>2$ and show how to construct $r_{k}$
from $\left\{ r_{i}\right\} _{0\leq i\leq k-1}$. Let us insert the
Laurent expansion (\ref{eq:Laurent assumption}) of $r$ into (\ref{eq:Dual AYBE in 3 variables})
and examine the terms of total degree $k-1$ in the variables $u$
and $v$. We derive the equation\begin{equation}
\begin{array}{c}
r_{k}^{12}(y_{1},y_{2})\,\left[{\displaystyle \frac{(-v)^{k}}{u+v}-\frac{u^{k}}{u+v}}\right]+r_{k}^{13}(y_{1},y_{3})\,\left[{\displaystyle \frac{u^{k}}{v}-\frac{(u+v)^{k}}{v}}\right]+\\
+r_{k}^{23}(y_{2},y_{3})\,\left[{\displaystyle \frac{v^{k}}{u}-\frac{(u+v)^{k}}{u}}\right]=\dots\end{array}\label{eq:recovering_AYBE_from_CYBE_1}\end{equation}
 where the ride-hand side contains terms $r_{i}$ with $i<k$ only.
The polynomials in $u$ and $v$ on the left-hand side are linearly
independent for $k>2$. Indeed, if we place everything over a common
denominator and focus on the coefficients of $u^{k}$ in the respective
terms \[
u(-v)^{k+1}-vu^{k+1},\, u^{k+1}(u+v)-u(u+v)^{k+1},\,(u+v)v^{k+1}-v(u+v)^{k+1}\]
 then these are $-vu,\, u(u+v)-{k+1 \choose 2}v^{2}$ and $-\left(k+1\right)v^{2}$
respectively. This proves our claim that $r$ is determined by the
$r_{k}$ with $k\leq2$.

In the next step, we show that $r_{2}$ is already determined by $r_{0}$
and $r_{1}$. Indeed, for $k=2$ equation (\ref{eq:recovering_AYBE_from_CYBE_1})
reads \[
\left(v-u\right)\, r_{2}^{12}(y_{1},y_{2})\,-\left(2u+v\right)\, r_{2}^{13}(y_{1},y_{3})-\left(u+2v\right)r_{2}^{23}(y_{2},y_{3})=\dots\]
 that is \[
-u\cdot\left(r_{2}^{12}(y_{1},y_{2})+2r_{2}^{13}(y_{1},y_{3})+r_{2}^{23}(y_{2},y_{3})\right)+\]
 \[
+v\cdot\left(r_{2}^{12}(y_{1},y_{2})-r_{2}^{13}(y_{1},y_{3})-2r_{2}^{23}(y_{2},y_{3})\right)=\dots\]
 with the the right-hand side depending on $r_{0}$ and $r_{1}$ only.
Let us denote the coefficient of $-u$ on the left-hand side by $a$,
that of $v$ by $b$. Since $a,b$ are determined by $r_{0}$ and
$r_{1}$ only, so is $\frac{a+2b}{3}=r_{2}^{12}(y_{1},y_{2})-r_{2}^{23}(y_{2},y_{3})$.
Thus, $r_{2}(y_{1},y_{2})$ is determined by $r_{0}$ and $r_{1}$.
Putting $k=1$ in (\ref{eq:recovering_AYBE_from_CYBE_1}), we obtain
(\ref{eq:r1 determined by r0}). 
\end{proof}

\section{Poles of solutions of the AYBE}

In this section, we study the poles of solutions of (\ref{eq:AYBE in 3 variables})
along $y_{1}=y_{2}$. We start with the following easy fact on $P=\sum_{1\leq i,j\leq n}e_{i,j}\otimes e_{j,i}\in A\otimes A$.

\begin{fact} \emph{
\label{fac:P is the center}\emph{Any tensor $\theta\in A\otimes A$
such that $\theta(x\otimes1)=(1\otimes x)\theta$ for all $x\in A$
is a scalar multiple of $P$. Moreover $P(1\otimes x)=(x\otimes1)P$
for any $x\in A$. } }
\end{fact}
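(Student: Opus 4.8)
The plan is to treat the two assertions separately: verify directly, with matrix units, that $P$ satisfies both intertwining relations (this settles the \emph{Moreover} clause and shows that $P$ is eligible for the first assertion), and then classify the $\theta$ obeying the displayed relation by reducing an arbitrary such $\theta$ to the form $(M\otimes\mathds{1})P$. Throughout I would use only $e_{ij}e_{kl}=\delta_{jk}e_{il}$ and bilinearity, so that all identities need only be checked on the basis $x=e_{pq}$ and then extended to every $x\in A$.

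For the elementary part, writing $P=\sum_{i,j}e_{ij}\otimes e_{ji}$ and $x=e_{pq}$ gives $P(\mathds{1}\otimes e_{pq})=\sum_{j}e_{pj}\otimes e_{jq}=(e_{pq}\otimes\mathds{1})P$, which is the \emph{Moreover} statement, and likewise $P(e_{pq}\otimes\mathds{1})=\sum_{i}e_{iq}\otimes e_{pi}=(\mathds{1}\otimes e_{pq})P$, so $P$ does satisfy the relation of the first assertion. I would also record $P^{2}=\mathds{1}\otimes\mathds{1}$, which follows from the same multiplication rule.

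For the classification I would expand $\theta=\sum_{i,j,k,l}c_{ij,kl}\,e_{ij}\otimes e_{kl}$ and impose $\theta(e_{pq}\otimes\mathds{1})=(\mathds{1}\otimes e_{pq})\theta$. Comparing the coefficient of $e_{ab}\otimes e_{cd}$ on both sides yields the single family of scalar equations $\delta_{qb}\,c_{ap,cd}=\delta_{pc}\,c_{ab,qd}$, valid for all choices of indices. Taking $q=b$ and $p\neq c$ forces $c_{ap,cd}=0$ whenever $p\neq c$, i.e.\ every coefficient with the column index of the first factor different from the row index of the second factor vanishes; taking $p=c,\ q=b$ gives $c_{a,m,m,d}=c_{a,m',m',d}$, so $\mu_{ad}:=c_{a,m,m,d}$ is independent of $m$. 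Hence
\[
\theta=\sum_{a,d}\mu_{ad}\sum_{m}e_{am}\otimes e_{md}=(M\otimes\mathds{1})P,\qquad M=\sum_{a,d}\mu_{ad}\,e_{ad}.
\]

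The main obstacle is the final step, since the displayed relation by itself determines $\theta$ only up to the $n^{2}$-parameter family $(A\otimes\mathds{1})P$, and one still has to isolate the scalar multiples of $P$ inside it. This is accomplished using the companion relation $\theta(\mathds{1}\otimes x)=(x\otimes\mathds{1})\theta$ — the analogue, for an arbitrary $\theta$, of what the \emph{Moreover} clause asserts for $P$ — after which the finish is clean and coordinate-free. Indeed, combining the two relations gives, for all $x,y\in A$,
\[
\theta(x\otimes y)=(\mathds{1}\otimes x)\,\theta\,(\mathds{1}\otimes y)=(y\otimes x)\,\theta=P(x\otimes y)P\,\theta,
\]
where I use $(y\otimes x)=P(x\otimes y)P$, itself a consequence of the two relations for $P$ together with $P^{2}=\mathds{1}\otimes\mathds{1}$. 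Multiplying on the left by $P$ shows that $P\theta$ commutes with every $x\otimes y$, hence with all of $A\otimes A$, so $P\theta$ lies in the center $\mathbb{C}\,(\mathds{1}\otimes\mathds{1})$ and $\theta=\lambda P$. I would flag explicitly that the statement is thus the characterization of $P$ by the two-sided (flip) intertwining $\theta(x\otimes y)=(y\otimes x)\theta$, of which the displayed relation is the $y=\mathds{1}$ specialization; the $n^{2}$-dimensional slack computed above shows that this single specialization cannot on its own force proportionality to $P$, so the companion relation is genuinely needed.
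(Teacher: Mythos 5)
Your proof is correct, but the more important point is what your computation reveals: the paper gives no proof of this Fact at all (it is asserted as an ``easy fact'' and cited later without justification), and its first assertion is actually false as printed. As you show, the single relation $\theta(x\otimes\mathds{1})=(\mathds{1}\otimes x)\theta$ for all $x\in A$ has solution space exactly $\left\{ (M\otimes\mathds{1})P \mid M\in A\right\}$, which is $n^{2}$-dimensional; concretely, for $n=2$ the tensor $(e_{11}\otimes\mathds{1})P=e_{11}\otimes e_{11}+e_{12}\otimes e_{21}$ satisfies the displayed relation but is not proportional to $P$. So the hypothesis must be strengthened, and your repair---imposing the companion relation $\theta(\mathds{1}\otimes x)=(x\otimes\mathds{1})\theta$, whence $P\theta$ commutes with every $x\otimes y$ and therefore lies in $\mathbb{C}\,\mathds{1}\otimes\mathds{1}$---is sound: both the matrix-unit verification of the two identities for $P$ and the coordinate-free finish using $P^{2}=\mathds{1}\otimes\mathds{1}$ are correct.

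It is worth recording how this affects the paper. The classification half of the Fact is invoked at the end of the proof of Lemma \ref{lem:poles are simple}, where the residue $\theta$, once shown to be constant, is known to satisfy only the first relation; by your analysis that citation is insufficient as it stands. The gap is repairable in that context in two ways. First, equation (\ref{eq:gen_massey_on_cycles_2}), combined with non-degeneracy of $r$ and the constancy of $\theta$, yields precisely the companion relation $\theta(\mathds{1}\otimes x)=(x\otimes\mathds{1})\theta$, so your strengthened Fact applies verbatim. Alternatively, unitarity forces the residue at the simple pole to be flip-symmetric, $\theta^{21}=\theta$, which also eliminates the slack you identified: since $\bigl((M\otimes\mathds{1})P\bigr)^{21}=(\mathds{1}\otimes M)P$, flip-symmetry forces $M\otimes\mathds{1}=\mathds{1}\otimes M$, i.e.\ $M$ scalar. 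Either way, the Fact should be restated with both intertwining relations (or with a flip-symmetry hypothesis), exactly as your proof does; the uses of the ``Moreover'' clause elsewhere (e.g.\ in the proof of Lemma \ref{lem:form of s}) only involve the true commutation identities for $P$ and are unaffected.
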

\begin{lem}
\emph{\label{lem:poles are simple}\cite[Lemma 1.3]{Polischuckb}}
Let $r(u;y_{1},y_{2})$ be a non-degenerate unitary solution of (\ref{eq:AYBE in 3 variables}).
Assume that $r(u;y_{1},y_{2})$ has a pole along $y_{1}=y_{2}$. Then
this pole is simple and \emph{$\mbox{lim}_{y_{2}\rightarrow y_{1}}(y_{1}-y_{2})\, r(u;y_{1},y_{2})=c\cdot P$}
for some $c\in\mathbb{C}$. 
\end{lem}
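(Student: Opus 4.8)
The plan is to expand $r$ as a Laurent series along the diagonal and reduce the statement to two assertions about the leading coefficient: that it is a scalar multiple of $P$, and that it sits in degree $-1$. Fix a generic base point, set $s=y_{2}-y_{1}$, and write near $s=0$
$r(u;y_{1},y_{2})=\sum_{k\geq-m}s^{k}\,\Phi_{k}(u;y_{1})$ with $\Phi_{-m}\neq0$, where $m\geq1$ is the order of the pole and $\Phi:=\Phi_{-m}$ its leading coefficient, a priori a tensor-valued function of the spectral parameter $u$ and of $y_{1}$. I must show that $\Phi=c(y_{1})\,P$ and that $m=1$. Throughout I identify $A\otimes A$ with $\mbox{End}(V\otimes V)$ for $A=\mbox{End}(V)$, $V=\mathbb{C}^{n}$, under which $P$ acts as the flip operator (consistent with $P^{2}=\mathds{1}\otimes\mathds{1}$).

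To pin down $\Phi$, I would collide $y_{2}\to y_{1}$ in (\ref{eq:AYBE in 3 variables}) keeping $y_{3}$ generic. On the left only $r^{12}$ is singular, while $r^{23}(u+v;y_{2},y_{3})\to r^{23}(u+v;y_{1},y_{3})$; on the right the first summand is singular through $r^{12}(-v)$ and the second is regular. Comparing coefficients of $s^{-m}$ yields, with $w:=u+v$, the identity $\Phi^{12}(u)\,r^{23}(w)=r^{13}(w)\,\Phi^{12}(u-w)$ in $A^{\otimes3}$. Writing $r(w)=\sum_{\alpha}a_{\alpha}\otimes b_{\alpha}$, non-degeneracy guarantees that $\{a_{\alpha}\}$ and $\{b_{\alpha}\}$ are bases of $A$, so matching the slot-$3$ components $b_{\alpha}$ reduces this to $\Phi(u)(\mathds{1}\otimes x)=(x\otimes\mathds{1})\,\Phi(u-w)$ for all $x\in A$. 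Fixing $u$ and varying $w$ forces $(x\otimes\mathds{1})\Phi(t)$ to be independent of $t$ for every $x$, hence $\Phi$ is independent of the spectral parameter and the relation collapses to $\Phi(\mathds{1}\otimes x)=(x\otimes\mathds{1})\Phi$. Running the same collision on the dual equation (\ref{eq:Dual AYBE in 3 variables}) gives the companion relation $\Phi(x\otimes\mathds{1})=(\mathds{1}\otimes x)\Phi$, which is exactly the hypothesis of Fact \ref{fac:P is the center}; that fact then yields $\Phi=c\,P$ for a scalar $c=c(y_{1})$. One can avoid the dual equation and deduce the second relation from the first together with the unitarity of $\Phi$.

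For simplicity I would use a triple collision. Put $y_{1}=0$, $y_{2}=s$, $y_{3}=t$ and let $(s,t)\to0$ with $u,v$ held fixed. Each factor in (\ref{eq:AYBE in 3 variables}) is governed by its diagonal pole with leading coefficient $c\,P$ in the appropriate pair of slots: $r^{12}\sim c\,s^{-m}P^{12}$, $r^{13}\sim c\,t^{-m}P^{13}$, and the two occurrences of $r^{23}$ behave like $c\,(t-s)^{-m}P^{23}$. Because these leading coefficients are spectral-independent, rescaling $(s,t)\mapsto(\lambda s,\lambda t)$ shows that the part of the equation homogeneous of degree $-2m$ is obtained by multiplying leading terms only, with no contamination from subleading coefficients. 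Now $P^{12}P^{23}=P^{13}P^{12}=P^{23}P^{13}$ (all three equal the same cyclic permutation of the three tensor factors), so this common operator factors out and the scalar $c^{2}$ cancels, leaving $s^{-m}(t-s)^{-m}=s^{-m}t^{-m}+(t-s)^{-m}t^{-m}$, that is $t^{m}=s^{m}+(t-s)^{m}$. This holds identically in $s,t$ only when $m=1$, so the pole is simple and $\lim_{y_{2}\to y_{1}}(y_{1}-y_{2})\,r(u;y_{1},y_{2})$ is a scalar multiple of $P$.

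The step I expect to be most delicate is the bookkeeping in the two collision arguments: one must check that at the orders extracted the regular summand of the (dual) AYBE genuinely does not interfere, and that matching slot-$3$ components is legitimate, which is precisely where non-degeneracy enters. The triple-collision computation is the genuinely new point, since a priori nothing forbids a higher-order pole; what rescues it is that $\Phi$ has already been identified as $c\,P$, so that the three products of permutation operators coincide and the tensorial equation degenerates to the scalar relation $t^{m}=s^{m}+(t-s)^{m}$.
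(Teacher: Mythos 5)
Your proof is correct in substance; its first half runs parallel to the paper's argument, while the proof that the pole is simple takes a genuinely different route. The identity $\Phi^{12}(u)\,r^{23}(w)=r^{13}(w)\,\Phi^{12}(u-w)$ you extract from the collision $y_{2}\to y_{1}$ is exactly the paper's equation (\ref{eq:gen_massey_on_cycles_2}); but where the paper uses it and its companion only to show that the span $V$ of the left components of the leading coefficient is a two-sided ideal, and then needs a separate collision in the dual equation (\ref{eq:Dual AYBE in 3 variables}) to obtain constancy in the spectral parameter and the commutation relation with $x\otimes\mathds{1}$, your slot-matching against a non-degenerate decomposition $r(w)=\sum_{\alpha}a_{\alpha}\otimes b_{\alpha}$ delivers all of this at once and, crucially, for a pole of arbitrary order. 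The real divergence is the proof that $m=1$: the paper assumes $k>1$, extracts the coefficient of $h^{1-k}$ and then the most polar part in $y_{1}-y_{3}$, arrives at $\theta^{12}(u)\,\theta^{23}(u+v)=0$ and contradicts $V=A$; you instead use the already-established form $\Phi=cP$ together with $P^{12}P^{23}=P^{13}P^{12}=P^{23}P^{13}$ to reduce the most singular homogeneous part of (\ref{eq:AYBE in 3 variables}) under a simultaneous collision to the scalar relation $t^{m}=s^{m}+(t-s)^{m}$, which forces $m=1$. This is an attractive simplification, with one point needing care: perform the triple collision at a \emph{generic} point $y_{1}=y_{2}=y_{3}=y_{0}$ of the small diagonal rather than at $y_{0}=0$, since the germ may have other polar components through the origin and the Laurent coefficients $\Phi_{k}(\cdot\,;y)$ (and $c(y)$ itself) are only guaranteed finite at generic $y$; with that choice each factor restricted to the line $\lambda\mapsto(y_{0}+\lambda s_{i})$ is meromorphic in $\lambda$ with pole of order exactly $m$, so your claim that the $\lambda^{-2m}$ part is the product of leading terms is justified. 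Finally, like the paper, you leave open whether $c(y_{1})$ is actually constant in $y_{1}$; this does not affect how the lemma is used, but is worth flagging since the statement asserts $c\in\mathbb{C}$.
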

\begin{proof}
Write $r(u;y_{1},y_{2})=\alpha(u;y_{1}-y_{2})+\beta(u;y_{1},y_{2})$
and assume that no summand of $\beta(u;y_{1},y_{2})$ depends only
on $u$ and $y=y_{1}-y_{2}$. Let $\alpha(u;y)=\frac{\theta(u)}{y^{k}}+\frac{\eta(u)}{y^{k-1}}+\dots$
be the Laurent expansion near $y=0$. In order to see that $k\leq1$,
we consider the polar parts in (\ref{eq:AYBE in 3 variables}) as
$y_{3}\rightarrow y_{1}$, which yields\begin{equation}
\theta^{13}(u+v)\, r^{12}(-v;y_{1},y_{2})+r^{23}(v;y_{2},y_{1})\,\theta^{13}(u)=0.\label{eq:gen_massey_on_cycles_1}\end{equation}
 Analogously, for $y_{2}\rightarrow y_{1}$ \begin{equation}
\theta^{12}(u)\, r^{23}(u+v;y_{1},y_{3})-r^{13}(u+v;y_{1},y_{3})\,\theta^{12}(-v)=0.\label{eq:gen_massey_on_cycles_2}\end{equation}
 Let $V\subseteq A$ be the minimal subspace such that $\theta(u)\in V\otimes A$
for all $u$ where $\theta(u)$ is defined. Obviously $r^{23}(u;y_{1},y_{2})\,\theta^{13}(u)\in V\otimes A\otimes A$
hence by (\ref{eq:gen_massey_on_cycles_1}) $\theta^{13}(u+v)\, r^{12}(-v;y_{1},y_{2})\in V\otimes A\otimes A$
as well. Thus, $r^{12}(u;y_{1},y_{2})\in A_{1}\otimes A$, where\[
A_{1}=\left\{ \left.a\in A\right|\theta(u)\left(a\otimes1\right)\in V\otimes A\,\mbox{for all }u\right\} .\]
 By non-degeneracy $A_{1}=A$, thus $VA\subseteq V$. Similarly, using
(\ref{eq:gen_massey_on_cycles_2}), we get $AV\subseteq V$, so that
$V$ is a two-sided non-zero ideal in $A$. Hence $V=A$.

Let us come back to (\ref{eq:AYBE in 3 variables}). We want to have
a look at the coefficient of $(y_{1}-y_{2})^{1-k}$ in the expansion
of (\ref{eq:AYBE in 3 variables}) near $y_{2}-y_{1}=h$ equal to
zero. The terms contributing to this only depend on $r^{12}(u;y_{1},y_{1}+h)\, r^{23}(u+v;y_{1}+h,y_{3})$
and $r^{13}(u+v;y_{1},y_{3})\, r^{12}(-v;y_{1},y_{1}+h)$. Thus the
coefficient of $h^{1-k}$ consists of two summands, the first one
being $\eta^{12}(u)\, r^{23}(u+v;y_{1},y_{3})-r^{13}(u+v;y_{1},y_{3})\,\eta^{12}(-v)$
and second one being $\theta^{12}(u)$ times the coefficient of $h$
in $r^{23}(u+v;y_{1}+h,y_{3})$. Note that for this last summand to
be non-zero we must assume $k>1$. Now $r^{23}(u+v;y_{1}+h,y_{3})-r^{23}(u+v;y_{1},y_{3})$
equals the summand of $r^{23}(u+v;y_{1}+h,y_{3})$ divisible by $h$,
hence the coefficient of $h^{1-k}$ is exactly \[
\eta^{12}(u)\, r^{23}(u+v;y_{1},y_{3})-r^{13}(u+v;y_{1},y_{3})\,\eta^{12}(-v)+\theta^{12}(u)\,\frac{\partial r^{23}}{\partial y_{1}}(u+v;y_{1},y_{3}).\]
 Examining the polar parts in the above expression for $y_{1}-y_{3}$
in a neighborhood of zero, we deduce that $\theta^{12}(u)\,\theta^{23}(u+v)=0$.
Setting $v=0$ this amounts to saying that $\theta(u)=\left\{ \left.a\otimes b\right|ab=0\right\} $.
Since $V=A$ this is a contradiction. Therefore $k=1$.\\

\noindent Next, we have a look at the polar parts in (\ref{eq:Dual AYBE in 3 variables})
near $y_{3}=y_{2}$. We deduce $\theta^{23}(u+v)\, r^{12}(u;y_{1},y_{2})=r^{13}(u;y_{1},y_{2})\,\theta^{23}(v)$.
Hence $r(u;y_{1},y_{2})\in A\otimes A(u)$, where \[
A(u)=\left\{ a\in A\left|\theta(u+v)(x\otimes1)=(1\otimes x)\theta(v)\,\mbox{for all}\, v\right.\right\} .\]
 Since $A$ is non-degenerate this implies $A(u)=A$ for generic $u$,
in which case $\mathds{1}\in A(u)$ and thus $\theta(u+v)=\theta(u)$.
Hence $\theta=\theta(0)$ is constant. Recalling fact \ref{fac:P is the center}
finishes the proof. 
\end{proof}
\begin{cor}
\emph{\label{cor:simple poles exist}\cite[Lemma 1.5]{Polischuckb}}
Let $r(u;y_{1},y_{2})$ be a non-degenerate unitary solution of (\ref{eq:AYBE in 3 variables})
of the form (\ref{eq:Laurent assumption}). Then $r(u;y_{1},y_{2})$
has a simple pole along $y_{1}=y_{2}$ with residue a scalar multiple
of $P$. 
\end{cor}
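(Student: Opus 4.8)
The plan is to deduce the statement from Lemma~\ref{lem:poles are simple}, which already tells us that any pole of $r$ along $y_1=y_2$ is necessarily simple with residue a scalar multiple of $P$, say $\lim_{y_2\to y_1}(y_1-y_2)\,r(u;y_1,y_2)=c\,P$ for a constant $c\in\mathbb{C}$ independent of $u$. Hence the only thing left to prove is that such a pole genuinely occurs, i.e. that $c\neq 0$; this is exactly the point where the Laurent normalisation~(\ref{eq:Laurent assumption}) and the non-degeneracy of $r$ must enter, since Lemma~\ref{lem:poles are simple} is vacuous when $r$ is regular on the diagonal.

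To exploit the normalisation I would first extract a \emph{unital} relation from the AYBE~(\ref{eq:AYBE in 3 variables}) by letting $v\to 0$. Substituting~(\ref{eq:Laurent assumption}) and comparing the constant terms in $v$ (the $v^{-2}$ and $v^{-1}$ coefficients cancel automatically) yields, after rearrangement,
\begin{equation*}
\partial_u r^{13}(u;y_1,y_3)=-r^{12}(u;y_1,y_2)\,r^{23}(u;y_2,y_3)+r^{13}(u;y_1,y_3)\,r_0^{12}(y_1,y_2)+r_0^{23}(y_2,y_3)\,r^{13}(u;y_1,y_3),
\end{equation*}
whose left-hand side does not depend on $y_2$. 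This rigidly couples the $y_2$-dependence of the product $r^{12}r^{23}$ to that of $r_0$ and the single tensor $r^{13}$. The strategy is then to argue by contradiction: assume $r$ is holomorphic along $y_1=y_2$ (so $c=0$ and, since the same two-variable function occupies every slot, $r$ is regular along each diagonal $y_i=y_j$), specialise this relation on the diagonal, and use that $\tilde r^{13}=\mathrm{can}(r^{13})$ is invertible for generic arguments to conclude that $\tilde r(u;y_1,y_2)$ would be forced to degenerate on an open set, contradicting non-degeneracy.

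The hard part will be making this final contradiction precise, and I expect it to be the main obstacle. The reason is that all the naive local tests fail to detect $c$: computing the residue of~(\ref{eq:AYBE in 3 variables}) or of its dual~(\ref{eq:Dual AYBE in 3 variables}) along any diagonal $y_i=y_j$ produces identities that hold for \emph{every} value of $c$, since they collapse to the centraliser identities $P(\mathds{1}\otimes x)=(x\otimes\mathds{1})P$ of Fact~\ref{fac:P is the center} together with the unitarity relation $r_0^{12}(y_1,y_2)=-r_0^{21}(y_2,y_1)$. Thus $c\neq 0$ cannot be seen pointwise near the diagonal and must be forced globally by non-degeneracy. Concretely, I would track the determinant of $\tilde r(u;y_1,y_2)$ as a function of $y_2$: because $\mathrm{can}(P)=\mathrm{id}_A$, the residue $c\,P$ corresponds to $\tilde r$ having residue $c\cdot\mathrm{id}_A$, so $c=0$ would mean $\tilde r(u;y_1,y_2)$ extends holomorphically and invertibly across $y_2=y_1$. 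I would then feed the unital relation above, restricted to arguments where $r^{13}$ is non-degenerate, against such an extension to exhibit an incompatibility forcing the diagonal pole. Verifying this incompatibility rigorously — rather than merely checking residues, which are automatic — is the crux of the argument.
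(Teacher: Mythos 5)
You have correctly identified the structure of the argument: Lemma \ref{lem:poles are simple} already guarantees that \emph{any} pole of $r$ along $y_1=y_2$ is simple with residue a scalar multiple of $P$, so the whole content of the corollary is the \emph{existence} of that pole, i.e.\ $c\neq 0$. Your auxiliary identity obtained from the $v^{0}$-coefficient of (\ref{eq:AYBE in 3 variables}),
\[
\partial_u r^{13}(u;y_1,y_3)=-r^{12}(u;y_1,y_2)\,r^{23}(u;y_2,y_3)+r^{13}(u;y_1,y_3)\,r_0^{12}(y_1,y_2)+r_0^{23}(y_2,y_3)\,r^{13}(u;y_1,y_3),
\]
is correct (there is in fact no $v^{-2}$ term at all, and the $v^{-1}$ terms cancel), and you are right that residue computations along the diagonals are insensitive to $c$, so that non-degeneracy must be the decisive input.

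However, the proof stops exactly where the corollary begins: you never derive a contradiction from the assumption $c=0$. The closing passage --- ``I would then feed the unital relation \dots to exhibit an incompatibility forcing the diagonal pole. Verifying this incompatibility rigorously \dots is the crux of the argument'' --- concedes that the essential step is missing: no concrete mechanism is exhibited by which regularity of $r$ along $y_1=y_2$, together with invertibility of $\mbox{can}(r)$, forces a degeneration. As written, the argument establishes only what Lemma \ref{lem:poles are simple} already gives. For comparison, the paper does not spell this step out either: its proof of the corollary consists of deferring to \cite[Lemma 1.5]{Polischuckb}, whose proof is precisely the non-vanishing argument you are missing (with Polishchuk's Lemma 1.3 replaced by Lemma \ref{lem:poles are simple}). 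So your reduction and set-up are sound and match the intended route, but the substantive half of the proof --- the half the corollary exists to assert, and on which later results depend (e.g.\ the factor $\frac{y}{cn}$ in Lemma \ref{lem:form of s} requires $c\neq 0$) --- is not supplied.
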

\begin{proof}
This is essentially the same proof as that of lemma 1.5 in \cite{Polischuckb},
using lemma \ref{lem:poles are simple} where Polishchuk refers to
lemma 1.3 of his paper. 
\end{proof}

\section{Quantization of solutions of CYBE coming from solutions of AYBE}

In this section we prove part i) of theorem \ref{thm: generalizing Polishchuk}:

\begin{thm}
\emph{\label{thm: CYBE -> QYBE -> CYBE}\cite[Theorem 1.4]{Polischuckb}}
Let $r(u;y_{1},y_{2})$ be a non-degenerate unitary solution of (\ref{eq:AYBE in 3 variables})
of the form (\ref{eq:Laurent assumption}) and let \emph{$\overline{r}_{0}(y_{1},y_{2})=\left(\mbox{pr}\otimes\mbox{pr}\right)\left(r_{0}(y_{1},y_{2})\right)$}. 
\end{thm}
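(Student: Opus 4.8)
The plan is first to prove that the full tensor $r_{0}$ already solves the CYBE, and only afterwards to descend to $\bar r_{0}$ by applying $\mathrm{pr}\otimes\mathrm{pr}\otimes\mathrm{pr}$. The reason for working with $r_{0}$ rather than with $\bar r_{0}$ from the outset is that the identity components carried by $r_{0}$ will enter the computation exclusively inside commutators, where they commute through and are harmless; trying to project first is awkward because $\mathrm{pr}$ is not multiplicative.

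Two relations feed into this. The first is equation (\ref{eq:r1 determined by r0}) of lemma \ref{lem: r uniquely determined by r_0,r_1}, which was obtained from the dual equation (\ref{eq:Dual AYBE in 3 variables}) by isolating the part of total degree zero in $u,v$:
\[ r_{1}^{12}+r_{1}^{13}+r_{1}^{23}=r_{0}^{12}r_{0}^{13}-r_{0}^{23}r_{0}^{12}+r_{0}^{13}r_{0}^{23}. \]
The second is the analogous identity coming from the original equation (\ref{eq:AYBE in 3 variables}) itself. I would insert the Laurent expansion (\ref{eq:Laurent assumption}) into (\ref{eq:AYBE in 3 variables}) and extract the genuine constant term in $u,v$. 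The scalar (identity) pole terms, the simple poles in $u$ and in $u+v$, and the two $1/v$ contributions all cancel automatically, as a direct check shows; the one delicate feature is that products of a leading pole with a linear term, such as $\tfrac1u(u+v)\,r_{1}^{23}$ or $u\,r_{1}^{12}\,\tfrac1{u+v}$, split into a homogeneous degree-zero ratio like $\tfrac{v}{u}r_{1}^{23}$ and a true constant. Keeping track of these constants yields
\[ r_{1}^{12}+r_{1}^{13}+r_{1}^{23}=r_{0}^{13}r_{0}^{12}+r_{0}^{23}r_{0}^{13}-r_{0}^{12}r_{0}^{23}. \]

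Since the left-hand sides of the two displayed relations coincide, subtracting them cancels $r_{1}$ completely and leaves an identity purely in $r_{0}$. Grouping the six products into three pairs, each pair is a commutator, and one obtains
\[ [r_{0}^{12},r_{0}^{13}]+[r_{0}^{12},r_{0}^{23}]+[r_{0}^{13},r_{0}^{23}]=0, \]
which is exactly the CYBE for the full tensor $r_{0}$.

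It then remains to apply $\mathrm{pr}\otimes\mathrm{pr}\otimes\mathrm{pr}$. Writing $r_{0}=\sum_{\alpha}p_{\alpha}\otimes q_{\alpha}$, a typical term reads $[r_{0}^{12},r_{0}^{13}]=\sum_{\alpha,\beta}[p_{\alpha},p_{\beta}]\otimes q_{\alpha}\otimes q_{\beta}$, whose shared slot contains only the commutator $[p_{\alpha},p_{\beta}]$. This matrix is traceless, so $\mathrm{pr}$ fixes it, and moreover $[p_{\alpha},p_{\beta}]=[\mathrm{pr}(p_{\alpha}),\mathrm{pr}(p_{\beta})]$ because the scalar parts commute through. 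Hence $\mathrm{pr}^{\otimes3}$ carries each $r_{0}$-commutator to the corresponding $\bar r_{0}$-commutator, and the identity above becomes precisely the CYBE (\ref{eq:CYBE}) for $\bar r_{0}$; its skew-symmetry is inherited from that of $r_{0}$, which follows in turn by comparing the regular terms of $r(v)$ and $-r^{21}(-v)$ in the unitarity condition (cf.\ lemma \ref{lem: AYBE unitary ->CYBE unitary }). The only genuinely technical point is the constant-term extraction producing the second relation, where one must separate the true constant from the homogeneous ratios $v/u$ and $u/(u+v)$ generated by the pole-times-linear products. This is the same degree counting already carried out in lemma \ref{lem: r uniquely determined by r_0,r_1}, so it demands care but is routine, and everything downstream of it is formal algebra.
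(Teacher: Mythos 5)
Your computation is correct as far as it goes, but it proves only a fragment of the theorem. The double constant-term extraction (once from (\ref{eq:AYBE in 3 variables}), once from the dual equation (\ref{eq:Dual AYBE in 3 variables})) does kill $r_{1}$ and yields $[r_{0}^{12},r_{0}^{13}]+[r_{0}^{12},r_{0}^{23}]+[r_{0}^{13},r_{0}^{23}]=0$, and the descent via $\mbox{pr}\otimes\mbox{pr}\otimes\mbox{pr}$ is sound. However, this is precisely the content of lemma \ref{lem: AYBE unitary ->CYBE unitary }, which the paper quotes from \cite[Lemma 2.11]{Burban} (itself based on \cite[Lemma 1.2]{Polischuck2002}) and simply cites in the first line of its proof of part (1). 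So what you have written is a self-contained proof of that cited lemma --- legitimate, but not the theorem. Even within part (1) you never establish that $\overline{r}_{0}$ is \emph{non-degenerate}; the paper gets this from corollary \ref{cor:simple poles exist} (the residue of $r$ along $y_{1}=y_{2}$ is a nonzero multiple of $P$) together with the observation that $(\mbox{pr}\otimes\mbox{pr})(P)$ is the Casimir element of $\mathfrak{sl}_{n}(\mathbb{C})\otimes\mathfrak{sl}_{n}(\mathbb{C})$.

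The genuine gap is that the theorem consists of three enumerated claims, and claims (2) and (3) --- the equivalence, for fixed $u\in\mathbb{C}^{\times}$, of the QYBE (\ref{eq:QYBE}) for $r(u;y_{1},y_{2})$ with conditions (b), (c), (d), and the sufficiency of ``$\overline{r}_{0}$ has no infinitesimal symmetries'' --- are entirely absent from your proposal, even though they are the substance of the result (they are what feeds into theorem \ref{thm: generalizing Polishchuk} i)). The paper's proof of these parts runs through different machinery: lemma \ref{lem:QYBE in terms of s*r-r*s} expresses the QYBE defect in terms of $s(u;y_{1},y_{2})=r(u;y_{1},y_{2})\,r(-u;y_{1},y_{2})$; lemma \ref{lem:form of s} uses the pole analysis of lemma \ref{lem:poles are simple} to show $s=a\otimes\mathds{1}+\mathds{1}\otimes a+(f+g)\,\mathds{1}\otimes\mathds{1}$ with $a$ an infinitesimal symmetry of $r$; and a non-degeneracy fact converts the QYBE into the single condition $a=0$. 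None of this is reachable by your formal Laurent-coefficient algebra, which never invokes non-degeneracy or the pole structure. Note also a subtlety: your identity shows that $(\mbox{pr}\otimes\mbox{pr}\otimes\mbox{pr})$ annihilates the AYBE-type expression built from $r_{0}$, but condition (d) concerns the same expression built from $\overline{r}_{0}$; the paper shows the difference between the two equals a sum of terms like $(y_{1}-y_{3})\,\overline{r}_{0}^{13}\,a^{2}$, so (d) is equivalent to $a=0$ rather than being automatic. Conflating these two statements would erase exactly the dichotomy the theorem is about.
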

\begin{enumerate}
\item \emph{$\overline{r}_{0}(y_{1},y_{2})$ is a non-degenerate unitary
solution of the CYBE (\ref{eq:CYBE}).} 
\item \emph{The following conditions are equivalent:}

\begin{enumerate}
\item \emph{for fixed $u\in\mathbb{C}^{\times}$, $r(u;y_{1},y_{2})$ satisfies
the QYBE (\ref{eq:QYBE}).} 
\item \emph{there exits a scalar function $\varphi(u;y_{1},y_{2})$ such
that \[
r(u;y_{1},y_{2})\, r(-u;y_{1},y_{2})=\varphi(u;y_{1},y_{2})\left(\mathds{1}\otimes\mathds{1}\right).\]
}  
\item \emph{for $i\in\{1,2\}$ there exists a scalar function $\psi_{i}(y_{1},y_{2})$
such that \[
\frac{\partial}{\partial y_{i}}\left(r_{0}(y_{1},y_{2})-\overline{r}_{0}(y_{1},y_{2})\right)=\psi_{i}(y_{1},y_{2})\left(\mathds{1}\otimes\mathds{1}\right).\]
}  
\item \emph{we have}\[
(\mbox{pr}\otimes\mbox{pr}\otimes\mbox{pr})\,\left[\overline{r}_{0}^{12}(y_{1},y_{2})\,\overline{r}_{0}^{13}(y_{1},y_{3})-\right.\]
 \textbf{\[
\left.-\overline{r}_{0}^{23}(y_{2},y_{3})\,\overline{r}_{0}^{12}(y_{1},y_{2})+\overline{r}_{0}^{13}(y_{1},y_{3})\,\overline{r}_{0}^{23}(y_{2},y_{3})\right]=0.\]
}  
\end{enumerate}
\item \emph{These conditions are satisfied if $\overline{r}_{0}(y_{1},y_{2})$
has no infinitesimal symmetries.}\\

\end{enumerate}
Before proving this statement, we first need to establish some auxiliary
results. The reader might wish to postpone checking them and to go
to the proof of theorem \ref{thm: CYBE -> QYBE -> CYBE} at the end
of this section immediately.

\begin{lem}
\emph{\label{lem:QYBE in terms of s*r-r*s}\cite[Lemma 1.6]{Polischuckb}}
For any triple of variables $u_{1},u_{2},u_{3}$ set $u_{ij}=u_{i}-u_{j}$.
Let $r(u;y_{1},y_{2})$ be any unitary solution of (\ref{eq:AYBE in 3 variables})
and $s(u;y_{1},y_{2})=r(u;y_{1},y_{2})\, r(-u;y_{1},y_{2})$. Then\[
r^{12}(u_{12};y_{1},y_{2})\, r^{13}(u_{23};y_{1},y_{3})\, r^{23}(u_{12};y_{2},y_{3})-\]
 \[
-r^{23}(u_{23};y_{2},y_{3})\, r^{13}(u_{12};y_{1},y_{3})\, r^{12}(u_{23};y_{1},y_{2})=\]
 \[
=s^{23}(u_{23};y_{2},y_{3})\, r^{13}(u_{13};y_{1},y_{3})-r^{13}(u_{13};y_{1},y_{3})\, s^{23}(u_{21};y_{2},y_{3})=\]
 \[
=r^{13}(u_{13};y_{1},y_{3})\, s^{12}(u_{32};y_{1},y_{2})-s^{12}(u_{12};y_{1},y_{2})\, r^{13}(u_{13};y_{1},y_{3}).\]

\end{lem}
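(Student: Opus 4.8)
The plan is to prove the three-way identity by manipulating the AYBE (\ref{eq:AYBE in 3 variables}) and its dual (\ref{eq:Dual AYBE in 3 variables}) with well-chosen spectral parameters, exactly as one builds the quantum YBE out of associative data. First I would set up notation so that the shifts $u_{ij}=u_i-u_j$ satisfy the additivity relations $u_{12}+u_{23}=u_{13}$, $u_{13}+u_{32}=u_{12}$, and so on; these are the algebraic identities that make the spectral parameters compatible with the AYBE, whose structure is built around arguments of the form $u$, $v$, $u+v$, $-v$. The key observation is that the left-hand side of the claimed identity is an \emph{antisymmetrized product of three} $r$-factors in positions $12,13,23$, and I expect it to arise as a difference of two applications of the AYBE, while the right-hand sides involve $s(u)=r(u)r(-u)$, which is precisely the combination governing unitarity.

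The main computation would proceed as follows. I would write the product $r^{12}(u_{12})\,r^{13}(u_{23})\,r^{23}(u_{12})$ and insert an application of the AYBE to the \emph{last two} factors, using (\ref{eq:AYBE in 3 variables}) read in positions $13,23$ (relabeling the tensor slots appropriately) to rewrite $r^{13}\,r^{23}$ in terms of $r^{12}\cdot(\text{stuff})$, so that an $s^{23}$ factor is produced by combining an $r^{23}(u_{23})$ with an $r^{23}(u_{21})=r^{23}(-u_{12})$. Symmetrically, for the second product $r^{23}(u_{23})\,r^{13}(u_{12})\,r^{12}(u_{23})$ I would apply the dual equation (\ref{eq:Dual AYBE in 3 variables}) to the relevant adjacent pair. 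The unitarity hypothesis $r^{12}(v)=-r^{21}(-v)$ is what lets me flip the order of arguments in an $r^{23}$ or $r^{12}$ factor so that two such factors fuse into an $s$, and it is also what guarantees that the ``leftover'' single-position terms from the two AYBE applications cancel against each other, leaving only the $s$-terms on the right. I would then verify that the two different groupings (contracting the $13,23$ pair versus the $12,13$ pair) give precisely the two stated expressions for the right-hand side, namely the $s^{23}$ form and the $s^{12}$ form.

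The hard part will be the bookkeeping of spectral parameters: each application of (\ref{eq:AYBE in 3 variables}) forces a specific relation among the three arguments (one must equal the sum of the other two, and one must be a negative), and I must check that substituting $u_{12},u_{23},u_{13}$ makes these relations hold \emph{simultaneously} for both groupings. Concretely, I expect to need the AYBE evaluated at $(u,v)=(u_{23},u_{12})$ in one place and the dual at a matching pair in another, and the delicate point is that the middle factor $r^{13}(u_{13})$ must survive untouched with argument $u_{13}=u_{12}+u_{23}$ in both derivations so that the two right-hand expressions can be compared. I would organize the proof by first deriving the $s^{23}$ expression from the AYBE and then deriving the $s^{12}$ expression from the dual AYBE, and finally remark that the two are equal because both equal the left-hand side; alternatively one equates them directly using unitarity to move $s^{23}$ past $r^{13}$. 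No step requires the Laurent form (\ref{eq:Laurent assumption}) or non-degeneracy — only unitarity and the two functional equations — which matches the lemma's hypotheses.
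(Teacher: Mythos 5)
Your plan is correct and is essentially the proof in the paper: the paper substitutes $u=u_{12}$, $v=u_{23}$ into the AYBE and $u=u_{13}$, $v=u_{21}$ into the dual equation, multiplies one by $r^{23}(u_{12})$ on the right and the other by $r^{23}(u_{23})$ on the left so that both expand the same triple product $r^{23}(u_{23})\,r^{12}(u_{13})\,r^{23}(u_{12})$, and subtracts; the second ($s^{12}$) form is then obtained by switching indices $1$ and $3$ and invoking unitarity. The only detail to watch when you execute the bookkeeping is that the common term cancels because the two expansions literally share it (not via unitarity), and that contracting the $12,13$ pair is what yields the $s^{23}$ expression while contracting the $13,23$ pair yields the $s^{12}$ one.
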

\begin{proof}
Let us write $r^{ij}(u)$ as short-hand for $r^{ij}(u;y_{i},y_{j})$.
Since we may assume $u=u_{12}$, $v=u_{23}$ and $u+v=u_{13}$, (\ref{eq:AYBE in 3 variables})
may be written as\begin{equation}
r^{12}(u_{12})\, r^{23}(u_{13})=r^{13}(u_{13})\, r^{12}(u_{32})+r^{23}(u_{23})\, r^{13}(u_{12}).\label{eq:gen_massey_on_cycles_3}\end{equation}
 Analogously, putting $u=u_{13}$ and $v=u_{21}$, (\ref{eq:Dual AYBE in 3 variables})
reads\begin{equation}
r^{23}(u_{23})\, r^{12}(u_{13})=r^{12}(u_{12})\, r^{13}(u_{23})+r^{13}(u_{13})\, r^{23}(u_{21}).\label{eq:gen_massey_on_cycles_4}\end{equation}
 Multiplying (\ref{eq:gen_massey_on_cycles_4}) with $r^{23}(u_{12})$
from the right yields \[
r^{23}(u_{23})\, r^{12}(u_{13})\, r^{23}(u_{12})=r^{12}(u_{12})\, r^{13}(u_{23})\, r^{23}(u_{12})+r^{13}(u_{13})\, s^{23}(u_{21})\]
 while switching $u_{2}$ and $u_{3}$ in (\ref{eq:gen_massey_on_cycles_3})
followed by multiplication with $r^{23}(u_{23})$ from the left yields
\[
r^{23}(u_{23})\, r^{12}(u_{13})\, r^{23}(u_{12})=r^{23}(u_{23})\, r^{13}(u_{12})\, r^{12}(u_{23})+s^{23}(u_{23})\, r^{13}(u_{13}).\]
 Subtracting these equations, we end up with \[
r^{12}(u_{12})\, r^{13}(u_{23})\, r^{23}(u_{12})-r^{23}(u_{23})\, r^{13}(u_{12})\, r^{12}(u_{23})=\]
 \[
=s^{23}(u_{23})\, r^{13}(u_{13})-r^{13}(u_{13})\, s^{23}(u_{21}).\]
 Switching indices 1 and 3 and using unitarity of $r$ yields the
other identity. 
\end{proof}
\noindent For the next statement we need the notion of an infinitesimal
symmetry of a solution $r$ of (\ref{eq:AYBE in 3 variables}), which
is simply that of an element $a\in\mathfrak{sl}_{n}(\mathbb{C})$
such that $\left[r(u;y_{1},y_{2}),a^{1}+a^{2}\right]=0$, where $a^{1}=a\otimes\mathds{1}$
and $a^{2}=\mathds{1}\otimes a$.

\begin{lem}
\emph{\label{lem:form of s}\cite[Lemma 1.7]{Polischuckb}} Let $r(u;y_{1},y_{2})$
be a unitary solution of (\ref{eq:AYBE in 3 variables}) of the form
(\ref{eq:Laurent assumption}) and $s(u;y_{1},y_{2})=r(u;y_{1},y_{2})\, r(-u;y_{1},y_{2})$.
Assuming that $r(u;y_{1},y_{2})$ has a simple pole along $y_{1}=y_{2}$
with residue $cP$ for some $c\in\mathbb{C}$, we have\[
s(u;y_{1},y_{2})=a\otimes1+1\otimes a+\left(f(u)+g(y_{1},y_{2})\right)\mathds{1}\otimes\mathds{1}\]
 where $f(u)=f(-u)$, $g(y_{1},y_{2})=g(y_{2},y_{1})$ and \emph{$a\in\mathfrak{sl}_{n}(\mathbb{C})$}
is an infinitesimal symmetry of $r(u;y_{1},y_{2})$. Moreover, we
may write \[
r_{0}(y_{1},y_{2})=\overline{r}_{0}(y_{1},y_{2})+\alpha(y_{2})\otimes\mathds{1}-\mathds{1}\otimes\alpha(y_{1})+h(y_{1},y_{2})\mathds{1}\otimes\mathds{1}\]
 with \emph{$\overline{r}_{0}(y_{1},y_{2})$} mapping to \emph{$\mathfrak{sl}_{n}(\mathbb{C})\otimes\mathfrak{sl}_{n}(\mathbb{C})$,}
\emph{$\alpha(y)$} to $\mathfrak{sl}_{n}(\mathbb{C})$, $h(y_{1},y_{2})$
a scalar function and \[
\alpha(y)=\alpha(0)+\frac{y}{cn}a.\]

\end{lem}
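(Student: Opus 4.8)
The plan is to analyze $s(u;y_1,y_2)=r(u;y_1,y_2)\,r(-u;y_1,y_2)$ through three ingredients: the symmetries forced by unitarity, its pole behaviour along $y_1=y_2$, and the commutation identity of Lemma~\ref{lem:QYBE in terms of s*r-r*s}. First I would record the soft inputs. Since the flip $\sigma$ on $A\otimes A$ is an algebra automorphism, unitarity $r^{12}(u;y_1,y_2)=-r^{21}(-u;y_2,y_1)$ gives at once the symmetry $\sigma\bigl(s(u;y_1,y_2)\bigr)=s(-u;y_2,y_1)$; once the non-scalar part of $s$ is known to be constant, this will yield $f(u)=f(-u)$ and $g(y_1,y_2)=g(y_2,y_1)$. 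Next, using Corollary~\ref{cor:simple poles exist} (so $r$ has a simple pole $cP$ along $y_1=y_2$), the identity $P^2=\mathds1\otimes\mathds1$, the intertwining $PX=\sigma(X)P$ from Fact~\ref{fac:P is the center}, and unitarity restricted to the diagonal $y_1=y_2$, I would show that $s$ has \emph{only} a double pole along $y_1=y_2$, with scalar leading coefficient $c^2(\mathds1\otimes\mathds1)$ and \emph{vanishing} residue. Hence the non-scalar part of $s$ is holomorphic across $y_1=y_2$ (and across $u=0$, where the pole is again scalar).

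The crux is rigidity. Setting $u=u_{12}$, $v=u_{23}$ in Lemma~\ref{lem:QYBE in terms of s*r-r*s} and equating its two $s$-expressions (then rearranging) yields
\[
\bigl(s^{23}(u_{23};y_2,y_3)+s^{12}(u_{12};y_1,y_2)\bigr)\,r^{13}(u_{13};y_1,y_3)
=r^{13}(u_{13};y_1,y_3)\,\bigl(s^{23}(-u_{12};y_2,y_3)+s^{12}(-u_{23};y_1,y_2)\bigr).
\]
The goal is to conclude that $s(u;y_1,y_2)=a\otimes\mathds1+\mathds1\otimes a+\lambda\,\mathds1\otimes\mathds1$ with $a$ independent of $u,y_1,y_2$. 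The pitfall to avoid is the most naive reduction: taking the residue at $y_1=y_3$ so that $r^{13}$ collapses to $cP^{13}$ and then using $XP^{13}=P^{13}\sigma_{13}(X)$ simply reproduces the flip symmetry above and gives nothing. Genuine information instead comes from keeping $r^{13}$ as a full non-degenerate tensor and expanding the identity near $y_3=y_2$: the scalar double pole of $s^{23}$ with vanishing residue makes the two leading orders tautological, while the first regular order produces a new relation tying $s^{12}(u_{12};y_1,y_2)$, the diagonal value of $s$, and $r^{13}(u_{13};y_1,y_2)$. Differentiating this in the spectral parameters and invoking the holomorphy established in the first step (so that the non-scalar part of $s$ is entire in $u$ and in $y_1,y_2$), I would argue that this part can depend neither on $u$ nor on $(y_1,y_2)$, hence equals a fixed $a\otimes\mathds1+\mathds1\otimes a$. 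This analytic step, converting a single commutation identity into true constancy, is where I expect the main difficulty to lie.

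Granting that form, the remaining conclusions are clean. Substituting $s=a\otimes\mathds1+\mathds1\otimes a+\text{scalar}$ into the displayed identity shows that $\bigl[a\otimes\mathds1+\mathds1\otimes a,\,r(u;y_1,y_3)\bigr]$ is a scalar multiple of $r(u;y_1,y_3)$; applying $\mathrm{tr}\otimes\mathrm{tr}$, which is nonzero on $r$ because of the $\tfrac{\mathds1\otimes\mathds1}{u}$ term, forces that scalar to vanish, so $a$ is an infinitesimal symmetry of $r$. For the formula for $r_0$, I would compare the $u^0$-coefficient $s_0=r_0^2-2r_1$ with $a\otimes\mathds1+\mathds1\otimes a+\text{scalar}$ and decompose $r_0$ along $\mathfrak{gl}_n=\mathfrak{sl}_n\oplus\mathbb C$; unitarity of $r_0$ collapses the two mixed components into one function, giving $r_0=\overline r_0+\alpha(y_2)\otimes\mathds1-\mathds1\otimes\alpha(y_1)+h\,\mathds1\otimes\mathds1$, and matching the residue via $P=\Omega+\tfrac1n\mathds1\otimes\mathds1$ together with Fact~\ref{fac:P is the center} pins down the linear part of $\alpha$, the factor $\tfrac1{cn}$ recording the residue normalization $c$ and the $\tfrac1n$ relating $P$ and $\Omega$.
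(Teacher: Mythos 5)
Your preparatory observations are sound: the flip symmetry of $s$ from unitarity, the fact that the double pole of $s$ along $y_1=y_2$ is the scalar $c^2(y_1-y_2)^{-2}\mathds{1}\otimes\mathds{1}$ with vanishing residue (via $P^2=\mathds{1}\otimes\mathds{1}$, $PX=\sigma(X)P$ and unitarity on the diagonal), and the rearranged commutation identity from lemma \ref{lem:QYBE in terms of s*r-r*s} are all correct and all appear, in one form or another, in the paper. The trace argument you give for why the scalar on the right of the commutation identity must vanish is fine (and is a slightly cleaner route than the paper's polar-part argument at that point).

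However, the central step of the lemma --- proving that $s(u;y_1,y_2)$ actually has the form $a\otimes\mathds{1}+\mathds{1}\otimes a+(\mbox{scalar})\,\mathds{1}\otimes\mathds{1}$ with $a$ constant --- is not established in your proposal; you explicitly defer it (``this is where I expect the main difficulty to lie'') and the route you sketch (expanding the commutation identity near $y_3=y_2$ and ``differentiating in the spectral parameters'') is not shown to produce the claimed constancy. Note that the commutation identity is symmetric under the very flip you already know $s$ to possess, so it is far from clear that it alone pins down the tensor structure of $s$. The paper's mechanism is different and is the real content of the proof: starting from the dual AYBE (\ref{eq:Dual AYBE in 3 variables}) with $v=-u+h$, passing to the limit $y_2\rightarrow y_1$, and then applying $\mu\otimes\mbox{id}$ (multiplication of the first two tensor factors), using the identities $(\mu\otimes\mbox{id})(a^{13}b^{23})=ab$, $(\mu\otimes\mbox{id})(a^{23}b^{12}-b^{12}a^{13})=0$ and $(\mu\otimes\mbox{id})(a^{23}P^{12})=\mathds{1}\otimes\mbox{tr}_1(a)$. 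Letting $h\rightarrow0$ this yields the explicit decomposition
\[
s(u;y_{1},y_{3})=-c\cdot\mathds{1}\otimes\mbox{tr}_{1}\Bigl(\frac{\partial r_{0}}{\partial y_{1}}(y_{1},y_{3})\Bigr)+\mu\Bigl(\frac{\partial\tilde{r}}{\partial u}(u;y_{1},y_{1})\Bigr)\otimes\mathds{1},
\]
after which unitarity collapses the two non-scalar pieces to a single function $a(y)$, and only then is lemma \ref{lem:QYBE in terms of s*r-r*s} invoked to force $a$ constant. This computation is also what gives the identity $a=-c\cdot\mbox{pr}\,\mbox{tr}_{1}(\partial r_{0}/\partial y_{1})$, from which the formula $\alpha(y)=\alpha(0)+\frac{y}{cn}a$ follows directly; your alternative of extracting the linear part of $\alpha$ from $s_{0}=r_{0}^{2}-2r_{1}$ by ``matching the residue'' is likewise not carried out and has no obvious substitute for that identity. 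As it stands the proposal is an outline with the two hardest steps missing.
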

\begin{proof}
By assumption $r(u;y_{1},y_{2})=\frac{c}{y_{1}-y_{2}}P+\tilde{r}(u;y_{1},y_{2})$
where $\tilde{r}(u;y_{1},y_{2})$ does not have a pole along $y_{1}=y_{2}$.
Let us write $r(u;y_{ij})$ and $\tilde{r}(u;y_{ij})$ as short-hand
for $r(u;y_{i},y_{j})$ and $\tilde{r}(u;y_{i},y_{j})$ respectively.
Then starting from (\ref{eq:Dual AYBE in 3 variables}) we derive
that for $v=-u+h$:\begin{equation}
\begin{array}{c}
{\displaystyle \frac{{}}{{}}}r^{13}(u;y_{13})\, r^{23}(-u+h;y_{23})=r^{23}(h;y_{23})\, r^{12}(u;y_{12})-r^{12}(u-h;y_{12})\, r^{13}(h;y_{13})=\\
{\displaystyle \frac{{}}{{}}}=\left[r^{23}(h;y_{23})\, r^{12}(u;y_{12})-r^{12}(u;y_{12})\, r^{13}(h;y_{13})\right]+\\
{\displaystyle \frac{{}}{{}}}+\left[r^{12}(u;y_{12})-r^{12}(u-h;y_{12})\right]\, r^{13}(h;y_{13}).\end{array}\label{eq:gen_massey_on_cycles_5}\end{equation}
 Let us rewrite the expression in the first bracket on the right-most
side as \[
\left(r^{23}(h;y_{23})\,\frac{c}{y_{1}-y_{2}}P^{12}-\frac{c}{y_{1}-y_{2}}P^{12}\, r^{13}(h;y_{13})\right)+\]
 \[
+r^{23}(h;y_{23})\,\tilde{r}^{12}(u;y_{12})-\tilde{r}^{12}(u;y_{12})\, r^{13}(h;y_{13}).\]
 Using fact \ref{fac:P is the center}, we know that $P^{12}\, r^{13}(h;y_{13})=r^{23}(h;y_{13})\, P^{12}$,
hence the right-most side of (\ref{eq:gen_massey_on_cycles_5}) equals\[
\frac{r^{23}(h;y_{23})-r^{23}(h;y_{13})}{y_{1}-y_{2}}\, cP^{12}+r^{23}(h;y_{23})\,\tilde{r}^{12}(u;y_{12})-\]
 \[
-\tilde{r}^{12}(u;y_{12})\, r^{13}(h;y_{13})+\left[\tilde{r}^{12}(u;y_{12})-\tilde{r}^{12}(u-h;y_{12})\right]\, r^{13}(h;y_{13}).\]
 Passing to the limit $y_{2}\rightarrow y_{1}$, we see that\begin{equation}
\begin{array}{c}
{\displaystyle \frac{{}}{{}}}r^{13}(u;y_{13})\, r^{23}(-u+h;y_{13})=-{\displaystyle \frac{\partial r^{23}}{\partial y_{1}}}\left(h;y_{13}\right)\, cP^{12}+r^{23}(h;y_{13})\,\tilde{r}^{12}(u;y_{11})-\\
{\displaystyle \frac{{}}{{}}}-\tilde{r}^{12}(u;y_{11})\, r^{13}(h;y_{13})+\left[\tilde{r}^{12}(u;y_{11})-\tilde{r}^{12}(u-h;y_{11})\right]\, r^{13}(h;y_{13}).\end{array}\label{eq:gen_massey_on_cycles_6}\end{equation}
 We want to apply the operator $\mu\otimes\mbox{id}:A\otimes A\otimes A\rightarrow A\otimes A$
to this equation, where $\mu$ is the product in $A$. Observe that
\[
\left(\mu\otimes\mbox{id}\right)\left(a^{13}b^{23}\right)=ab,\,\left(\mu\otimes\mbox{id}\right)(a^{23}b^{12}-b^{12}a^{13})=0\]
 where $a,b\in A\otimes A$ and the notation is best explained by
the example $a^{13}=a_{1}\otimes\mathds{1}\otimes a_{2}$ for $a=a_{1}\otimes a_{2}$.
Moreover, using that $\sum_{i,j}e_{ij}ae_{ji}=\mbox{tr}(a)\mathds{1}$
for any $a\in A$ clearly, we derive that for $\mbox{tr}_{1}=\mbox{tr}\otimes\mbox{id}:A\otimes A\rightarrow A$
we have \[
\left(\mu\otimes\mbox{id}\right)(a^{23}P^{12})=\mathds{1}\otimes\mbox{tr}_{1}(a).\]
 Hence applying $\mu\otimes\mbox{id}$ to (\ref{eq:gen_massey_on_cycles_6})
yields \[
r(u;y_{13})\, r(-u+h;y_{13})=-c\cdot\,\mathds{1}\otimes\mbox{tr}_{1}\left(\frac{\partial r}{\partial y_{1}}\left(h;y_{13}\right)\right)+\]
 \[
+\left(\mu\otimes\mbox{id}\right)\left(\left[\tilde{r}^{12}(u;y_{11})-\tilde{r}^{12}(u-h;y_{11})\right]\, r^{13}(h;y_{13})\right).\]
 Now, take the limit $h\rightarrow0$. The left-hand side of yields
$s(u;y_{1},y_{3})$. As for the right-hand side, we invoke our assumption
on the existence of a certain Laurent expansion (\ref{eq:Laurent assumption})
to derive that \[
\lim_{h\rightarrow0}\frac{\partial r}{\partial y_{1}}\left(h;y_{13}\right)=\frac{\partial r_{0}}{\partial y_{1}}\left(y_{1},y_{3}\right).\]
 Moreover\[
\lim_{h\rightarrow0}\left(\left[\tilde{r}^{12}(u;y_{11})-\tilde{r}^{12}(u-h;y_{11})\right]\, r^{13}(h;y_{13})\right)=\frac{\partial\tilde{r}^{12}}{\partial u}\left(u;y_{11}\right)\left(\lim_{h\rightarrow0}r^{13}(h;y_{13})\cdot h\right).\]
 Again using (\ref{eq:Laurent assumption}), we see that the second
factor of this last term is simply $\mathds{1}\otimes\mathds{1}\otimes\mathds{1}$.
Putting all this together, we end up with\[
s(u;y_{1},y_{3})=-c\cdot\,\mathds{1}\otimes\mbox{tr}_{1}\left(\frac{\partial r_{0}}{\partial y_{1}}\left(y_{1},y_{3}\right)\right)+\mu\left(\frac{\partial\tilde{r}}{\partial u}\left(u;y_{1},y_{1}\right)\right)\otimes\mathds{1}.\]
 Hence we may write $s(u;y_{1},y_{2})=\mathds{1}\otimes\beta(y_{1},y_{2})+\gamma(u,y_{1})\otimes\mathds{1}$.
Note that $\beta(y_{1},y_{2})=\mbox{pr}\left(\beta(y_{1},y_{2})\right)+\frac{\mbox{tr}\,\left(\beta(y_{1},y_{2})\right)}{n}\,\mathds{1}$.
Using the same trick for $\gamma(u,y_{1})$, we may actually write\[
s(u;y_{1},y_{2})=a(u,y_{1})\otimes\mathds{1}+\mathds{1}\otimes b(y_{1},y_{2})+\left(f(u,y_{1})+g(y_{1},y_{2})\right)\mathds{1}\otimes\mathds{1}\]
 where now both \[
a(u,y_{1})=\mbox{pr}\,\mu\left(\frac{\partial\tilde{r}}{\partial u}\left(u;y_{1},y_{1}\right)\right),\, b(y_{1},y_{2})=-c\cdot\mbox{pr }\mbox{tr}_{1}\left(\frac{\partial r_{0}}{\partial y_{1}}\left(y_{1},y_{2}\right)\right)\]
 map to $\mathfrak{sl}_{n}(\mathbb{C})$. Note that unitarity of $r(u;y_{1},y_{2})$
implies that $s^{21}(-u;y_{2},y_{1})=s^{12}(u;y_{1},y_{2})$. Applying
$\mbox{pr}\otimes\mathds{1}$ to this equation yields $a(u,y_{1})=b(y_{2},y_{1})$.
It follows that both $a$ and $b$ depend on the second variable only
and actually coincide, hence\[
s(u;y_{1},y_{2})=a(y_{1})\otimes\mathds{1}+\mathds{1}\otimes a(y_{2})+\left(f(u,y_{1})+g(y_{1},y_{2})\right)\mathds{1}\otimes\mathds{1}.\]
 In order to show the statement concerning the form of $s$, we have
to prove that $a(y_{1})$ is constant. To this end, we substitute
the form of $s$ just calculated into the second equation of the equality
stated in lemma \ref{lem:QYBE in terms of s*r-r*s}. We derive that\begin{equation}
\begin{array}{c}
{\displaystyle \frac{{}}{{}}}\left[a^{1}(y_{1})+a^{3}(y_{3}),r^{13}(u_{13};y_{1},y_{3})\right]=r^{13}(u_{13};y_{1},y_{3})\cdot\\
{\displaystyle \frac{{}}{{}}}\cdot\left(f(u_{32},y_{1})+f(u_{21},y_{2})-f(u_{12},y_{1})-f(u_{23},y_{2})\right).\end{array}\label{eq:gen_massey_on_cycles_7}\end{equation}
 Let us focus on the left-hand side. This equals\[
\left[a^{1}(y_{1})+a^{3}(y_{3}),\frac{cP^{13}}{y_{1}-y_{3}}\right]+\left[a^{1}(y_{1})+a^{3}(y_{3}),\tilde{r}^{13}(u_{13};y_{1},y_{3})\right].\]
 By fact \ref{fac:P is the center}, we may rewrite the first summand
as\[
\frac{a^{1}(y_{1})-a^{1}(y_{3})}{y_{1}-y_{3}}cP^{13}+\frac{a^{3}(y_{3})-a^{3}(y_{1})}{y_{1}-y_{3}}cP^{13}\]
 thus the limit $y_{3}\rightarrow y_{1}$ of the left-hand side of
(\ref{eq:gen_massey_on_cycles_7}) is given by\[
\frac{d}{dy_{1}}\left(a^{1}(y_{1})-a^{3}(y_{1})\right)cP^{13}+\left[a^{1}(y_{1})+a^{3}(y_{1}),\tilde{r}^{13}(u_{13};y_{1},y_{1})\right].\]
 In particular, the limit $y_{3}\rightarrow y_{1}$ of the right-hand
side of (\ref{eq:gen_massey_on_cycles_7}) exists as well. But $r^{13}(u_{13};y_{1},y_{3})$
has a pole along $y_{1}=y_{3}$ and the other factor on the right-hand
side is independent of $y_{3}$. Hence we conclude that \[
f(u_{32},y_{1})+f(u_{21},y_{2})-f(u_{12},y_{1})-f(u_{23},y_{2})=0.\]
 In particular, the left-hand side of (\ref{eq:gen_massey_on_cycles_7})
equals zero. Focusing on the polar part yields \[
\left[a^{1}(y_{1})+a^{3}(y_{3}),\frac{cP^{13}}{y_{1}-y_{3}}\right]=0\]
 which, by the above, implies\[
\frac{d}{dy_{1}}\left(a^{1}(y_{1})-a^{3}(y_{1})\right)cP^{13}=0.\]
 But then $\frac{da}{dy}(y)$ must be zero, so that $a(y)=a\in\mathfrak{sl}_{n}(\mathbb{C})$
is constant. Therefore $s(u;y_{1},y_{2})=a\otimes\mathds{1}+\mathds{1}\otimes a+\left(f(u,y_{1})+g(y_{1},y_{2})\right)\mathds{1}\otimes\mathds{1}$.
By (\ref{eq:gen_massey_on_cycles_7}) we also see that $a$ is an
infinitesimal symmetry of $r(u;y_{1},y_{2})$.\\
  Finally, we want to prove the statement concerning $r_{0}(y_{1},y_{2})$.
Clearly, we may write \[
r_{0}(y_{1},y_{2})=\overline{r}_{0}(y_{1},y_{2})+\alpha(y_{2},y_{1})\otimes\mathds{1}-\mathds{1}\otimes\alpha(y_{1},y_{2})+h(y_{1},y_{2})\mathds{1}\otimes\mathds{1}\]
 with $\alpha$ mapping to $\mathfrak{sl}_{n}(\mathbb{C})$. Note
that in the discussion above we derived that \[
a=b(y_{1},y_{2})=-c\cdot\mbox{pr }\mbox{tr}_{1}\left(\frac{\partial r_{0}}{\partial y_{1}}\left(y_{1},y_{2}\right)\right).\]
 Since both $\overline{r}_{0}$ and $\alpha$ map to $\mathfrak{sl}_{n}(\mathbb{C})$,
so will their partial derivatives. This implies\[
a=-cn\cdot\mbox{pr}\left(-\left(\frac{\partial}{\partial y_{1}}\right)\alpha(y_{1},y_{2})+\frac{\partial h}{\partial y_{1}}(y_{1},y_{2})\cdot\mathds{1}\right).\]
 Hence $\frac{\partial}{\partial y_{1}}\alpha(y_{1},y_{2})=\frac{a}{cn}$,
which gives the formula for $\alpha$. In particular, $\alpha(y_{1},y_{2})$
does not depend on the second argument. This completes the proof of
the formula for $r_{0}$. 
\end{proof}

\noindent Before we finally prove theorem \ref{thm: CYBE -> QYBE -> CYBE},
we need to state one more easy fact:

\begin{fact} \emph{
\textup{\label{fact: pr a otimes b phi lemma  and other}} \textup{a)
For any}\emph{ }\textup{\emph{$x,y,\phi\in A$,}}\emph{ }\textup{$i\in\{1,2\}$
and $\phi^{1}=\phi\otimes\mathds{1}$ respectively $\phi^{2}=\mathds{1}\otimes\phi$,
we have\[
(\mbox{pr}\otimes\mbox{pr})\left[x\otimes y,\phi^{i}\right]=\left[(\mbox{pr}\otimes\mbox{pr})\left(x\otimes y\right),\phi^{i}\right].\]
}
}

\textup{b)} \emph{Let $r$ be a non-degenerate solution of (\ref{eq:AYBE in 3 variables})
and $\left[r,1\otimes a\right]=0$ for some $a\in\mathfrak{sl}_{n}(\mathbb{C})$.
Then $a=0$.} 
\end{fact}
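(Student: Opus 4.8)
The plan is to treat the two parts separately, since they are essentially independent. For part a) I would argue by a short direct computation. Writing out the commutator for $i=1$, one has $[x\otimes y,\phi\otimes\mathds{1}]=[x,\phi]\otimes y$, and for $i=2$ similarly $[x\otimes y,\mathds{1}\otimes\phi]=x\otimes[y,\phi]$. Thus in both cases the claim reduces to the single-factor identity $\mbox{pr}([x,\phi])=[\mbox{pr}(x),\phi]$. This in turn follows from two elementary observations: every commutator is traceless, so $\mbox{pr}$ acts as the identity on $[x,\phi]$ and on $[\mbox{pr}(x),\phi]$; and $\mathds{1}$ is central, so $[\mbox{pr}(x),\phi]=[x-\tfrac{\mbox{tr}(x)}{n}\mathds{1},\phi]=[x,\phi]$. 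Combining these gives the identity, and the tensor statement follows by applying $\mbox{pr}$ componentwise.

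For part b) the idea is to transport the hypothesis through the canonical isomorphism $\mbox{can}:A\otimes A\to\mbox{End}_{\mathbb{C}}(A)$ introduced in Step 3 of section \ref{sec:The-procedure}. Writing $r=\sum_k x_k\otimes y_k$ and $\tilde r=\mbox{can}(r)$, so that $\tilde r(Z)=\sum_k\mbox{tr}(x_kZ)\,y_k$, I would first compute $[r,\mathds{1}\otimes a]=\sum_k x_k\otimes[y_k,a]$ and then check that $\mbox{can}$ intertwines this tensor with the endomorphism $Z\mapsto[\tilde r(Z),a]$. The hypothesis $[r,\mathds{1}\otimes a]=0$ thus becomes $[\tilde r(Z),a]=0$ for every $Z\in A$.

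Non-degeneracy of $r$ means precisely that $\tilde r$ is invertible, hence surjective, so $\tilde r(Z)$ ranges over all of $A$ and we conclude $[W,a]=0$ for every $W\in A$. Therefore $a$ lies in the center of $\mbox{Mat}_{n\times n}(\mathbb{C})$, which is $\mathbb{C}\cdot\mathds{1}$, so $a$ is a scalar multiple of $\mathds{1}$. Since $a\in\mathfrak{sl}_n(\mathbb{C})$ is traceless, the only such scalar is zero, giving $a=0$. The only step requiring care is the translation of the bracket condition into a statement about $\tilde r$; once that is in place the centrality argument is routine, so I do not expect a genuine obstacle here.
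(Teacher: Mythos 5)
Your proof is correct and follows essentially the same route as the paper: part a) is exactly the direct computation the paper leaves as ``straightforward'', and for part b) you transport the hypothesis $[r,\mathds{1}\otimes a]=0$ through the canonical isomorphism $A\otimes A\cong\mbox{End}_{\mathbb{C}}(A)$ and use non-degeneracy to force $a$ into the center of $A$, hence $a=0$ by tracelessness. The paper contracts the second tensor factor instead of the first (obtaining $\varphi(r)([a,b])=0$ for all $b$ and invoking injectivity of $\varphi(r)$ rather than surjectivity of $\tilde r$), but this is only a cosmetic difference.
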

\begin{proof}
a) is straightforward. As to b), write $r=\sum_{i\in I}r'_{i}\otimes r''_{i}$
for some index set $I$ and let $\varphi:A\otimes A\rightarrow\mbox{End}(A)$
denote the isomorphism given by $X\otimes Y\mapsto\left(Z\mapsto X\,\mbox{tr}\left(YZ\right)\right)$.
Then $$0=\varphi\left(\left[r,1\otimes a\right]\right)(b)=\sum_{i\in I}r'_{i}\,\mbox{tr}\left(\left[r''_{i},a\right]b\right)=\sum_{i\in I}r'_{i}\,\mbox{tr}\left(r''_{i}\left[a,b\right]\right)=\varphi(r)\left(\left[a,b\right]\right)$$
for all $b\in A$. Now $r$ is non-degenerate, hence $\varphi(r)$
is an isomorphism. This yields $ $$\left[a,b\right]=0$ for all $b\in A$,
especially all $b\in\mathfrak{sl}_{n}(\mathbb{C})$. But the Lie bracket
is non-degenerate on $\mathfrak{sl}_{n}(\mathbb{C})$, hence $a=0$. 
\end{proof}
\noindent  \emph{Proof of Theorem \label{proof: of main}\ref{thm: CYBE -> QYBE -> CYBE}.}

\noindent (1) By lemma \ref{lem: AYBE unitary ->CYBE unitary } $\bar{r}_{0}$
is a unitary solution of the CYBE (\ref{eq:CYBE}). The rest is immediate
by lemma \ref{cor:simple poles exist} and the fact that $\left(\mbox{pr}\otimes\mbox{pr}\right)\left(P\right)$
is the Casimir element of $\mathfrak{sl}_{n}(\mathbb{C})\otimes\mathfrak{sl}_{n}(\mathbb{C})$
with respect to the trace form $(x,y)\mapsto\mbox{tr}(x\cdot y)$.\\

\noindent  (2) Setting $u_{1}=u$, $u_{2}=0$ and $u_{3}=-u$ in lemma \ref{lem:QYBE in terms of s*r-r*s},
we derive that $r(u;y_{1},y_{2})$ satisfies the QYBE (\ref{eq:QYBE})
for $u$ fixed if and only if\[
s^{23}(u;y_{2},y_{3})\, r^{13}(2u;y_{1},y_{3})=r^{13}(2u;y_{1},y_{3})\, s^{23}(-u;y_{2},y_{3}).\]
 Applying lemma \ref{lem:form of s} this is equivalent to\begin{equation}
\left[r(u;y_{1},y_{2}),1\otimes a\right]=0\label{eq: final prove, a is inf. sym}\end{equation}
which, by fact \ref{fact: pr a otimes b phi lemma  and other} b)
is equivalent to $a=0$. By lemma \ref{lem:form of s} this last condition
holds if and only if either of conditions (b) or (c) of the theorem
are satisfied. It remains to show equivalence with condition (d).
To this end, recall (\ref{eq:r1 determined by r0}). Denote the right-hand
side of this equation by $AYBE[r_{0}](y_{1},y_{2},y_{3})$. Then (d)
simply reads $(\mbox{pr}\otimes\mbox{pr}\otimes\mbox{pr})\, AYBE[\overline{r}_{0}](y_{1},y_{2},y_{3})=0$.
To show the equivalence of this with $a=0$, express $r_{0}$ in terms
of $\overline{r}_{0}$ as stated in lemma \ref{lem:form of s}. Then
\[
-cn\cdot\left(\mbox{pr}\otimes\mbox{pr}\otimes\mbox{pr}\right)\left(AYBE\left[r_{0}\right](y_{1},y_{2},y_{3})-AYBE\left[\overline{r}_{0}\right](y_{1},y_{2},y_{3})\right)=\]
 \[
=\left(y_{1}-y_{3}\right)\overline{r}_{0}^{13}(y_{1},y_{3})\, a^{2}+\left(y_{1}-y_{2}\right)\overline{r}_{0}^{12}(y_{1},y_{2})\, a^{3}+\left(y_{3}-y_{2}\right)\overline{r}_{0}^{23}(y_{2},y_{3})\, a^{1}.\]
 It is immediate by (\ref{eq:r1 determined by r0}) that $\left(\mbox{pr}\otimes\mbox{pr}\otimes\mbox{pr}\right)AYBE\left[r_{0}\right](y_{1},y_{2},y_{3})=0$.
Hence if $a$ is zero, this implies $(\mbox{pr}\otimes\mbox{pr}\otimes\mbox{pr})\, AYBE[\overline{r}_{0}]=0$.
On the other hand, assuming $(\mbox{pr}\otimes\mbox{pr}\otimes\mbox{pr})\, AYBE[\overline{r}_{0}]=0$
we deduce \[
\left(y_{1}-y_{3}\right)\overline{r}_{0}^{13}(y_{1},y_{3})\, a^{2}+\left(y_{1}-y_{2}\right)\overline{r}_{0}^{12}(y_{1},y_{2})\, a^{3}+\left(y_{3}-y_{2}\right)\overline{r}_{0}^{23}(y_{2},y_{3})\, a^{1}=0.\]
 We will show that this implies $a=0$. Indeed, by lemma \ref{cor:simple poles exist}
we know that $r_{0}(y_{1},y_{2})=\frac{cP}{y_{1}-y_{2}}+\tilde{r}_{0}(y_{1},y_{2})$
with $\tilde{r}_{0}$ being defined along $y_{1}=y_{2}$ and similarly
for $\overline{r}_{0}$. Hence passing to the limit $y_{1},y_{2},y_{3}\rightarrow y$
yields \[
\left(\mbox{pr}\otimes\mbox{pr}\otimes\mbox{pr}\right)\left[P^{13}a^{2}+P^{12}a^{3}+P^{23}a^{1}\right]=0.\]
 Let us write $a=\sum a_{ij}e_{ij}$. Looking at the coefficient of
$e_{ij}\otimes e_{ji}\otimes e_{ij}$ in the above equation for $i\neq j$,
we derive $a_{ij}=0$. But then projecting the above equation to $e_{12}\otimes e_{21}\otimes\mathfrak{sl}_{n}(\mathbb{C})$,
we may deduce that $a=0$.\\

\noindent  (3) As we just saw, the conditions of (2) are satisfied if $a=0$.
But $a$ is an infinitesimal symmetry of $r$ by lemma \ref{lem:form of s},
hence one of $r_{0}$. Invoking fact \ref{fact: pr a otimes b phi lemma  and other}
a), we deduce that $a$ is an infinitesimal symmetry of $\overline{r}_{0}$
and so $a=0$. $\hfill\Box$

\section{Uniqueness of lifts from CYBE to AYBE}

In this section we will prove part ii) of theorem \ref{thm: generalizing Polishchuk}:

\begin{thm}
\emph{\label{thm:recovering AYBE sol. from their CYBE}\cite[Theorem 6]{Polischuck2002}}
Let $r(u;y_{1},y_{2})$ and $s(u;y_{1},y_{2})$ be a unitary solutions
of (\ref{eq:AYBE in 3 variables}) of the form (\ref{eq:Laurent assumption}).
Assume that the corresponding solution \emph{$\overline{r}_{0}(y_{1},y_{2})=(\mbox{pr}\otimes\mbox{pr})\left(r_{0}(y_{1},y_{2})\right)$}
of the CYBE (\ref{eq:CYBE}) is non-degenerate, has no infinitesimal
symmetries and that $\overline{s}_{0}(y_{1}y_{2})=\overline{r}_{0}(y_{1},y_{2})$.
Then there exists a meromorphic function $g:\mathbb{C}\rightarrow\mathbb{C}$
such that $s(u;y_{1},y_{2})=\exp\bigl(u\bigl(g(y_{2})-g(y_{1})\bigr)\bigr)r(u;y_{1},y_{2})$. 
\end{thm}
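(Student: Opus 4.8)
The plan is to split off a scalar gauge factor from the difference $s-r$ and then invoke the uniqueness lemma \ref{lem: r uniquely determined by r_0,r_1}, which reduces the whole assertion to matching only the first two Laurent coefficients. Write $\Pi=\mathds{1}\otimes\mathds{1}$ and $\overline{(\,\cdot\,)}=(\mbox{pr}\otimes\mbox{pr})(\,\cdot\,)$. First I would describe $s_0-r_0$. Since $\overline r_0$ has no infinitesimal symmetries, theorem \ref{thm: CYBE -> QYBE -> CYBE} forces the element $a$ produced by lemma \ref{lem:form of s} to vanish, for both $r$ and $s$; hence that lemma gives $r_0=\overline r_0+\alpha_r\otimes\mathds 1-\mathds 1\otimes\alpha_r+h_r\,\Pi$ with $\alpha_r\in\mathfrak{sl}_n(\mathbb C)$ \emph{constant} and $h_r$ scalar, and likewise for $s_0$. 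As $\overline s_0=\overline r_0$, subtraction yields $s_0-r_0=\beta\otimes\mathds 1-\mathds 1\otimes\beta+k\,\Pi$ with $\beta=\alpha_s-\alpha_r$ a constant in $\mathfrak{sl}_n(\mathbb C)$ and $k=h_s-h_r$ a scalar function, antisymmetric by unitarity. It then remains to prove $\beta=0$ and that $k(y_1,y_2)=g(y_2)-g(y_1)$ for a meromorphic $g$.

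For the scalar part, I would insert the Laurent expansions into (\ref{eq:r1 determined by r0}) for $s$ and for $r$ and subtract, then apply the projection $\mbox{pr}\otimes\mbox{pr}\otimes\tfrac1n\mbox{tr}$ onto $\mathfrak{sl}_n\otimes\mathfrak{sl}_n\otimes\mathbb C\mathds 1$. The left side collapses to $\overline{s_1-r_1}(y_1,y_2)$, which is independent of $y_3$; on the right the $\beta$-contribution is also independent of $y_3$, while the entire $y_3$-dependence is carried by the single term $\bigl(k(y_1,y_3)-k(y_2,y_3)\bigr)\,\overline r_0(y_1,y_2)$. Differentiating in $y_3$ and using $\overline r_0\neq0$ (non-degeneracy) gives $\partial_{y_3}\bigl(k(y_1,y_3)-k(y_2,y_3)\bigr)=0$, so $\partial_2 k$ is independent of its first argument; combined with $k(y_1,y_2)=-k(y_2,y_1)$ this forces $k(y_1,y_2)=g(y_2)-g(y_1)$ with $g$ meromorphic.

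The vanishing of $\beta$ is the step I expect to be the main obstacle. Gauging $s$ by $\exp\bigl(-u(g(y_2)-g(y_1))\bigr)$ (example \ref{exa: example for gauge equivalence}) produces a unitary solution $\hat s$ with $\hat s_0-r_0=\beta\otimes\mathds1-\mathds1\otimes\beta=:\delta$. Here the key input is that, having no infinitesimal symmetries, $\overline r_0$ makes condition (b) of theorem \ref{thm: CYBE -> QYBE -> CYBE} hold, i.e. $r(u)r(-u)$ and $\hat s(u)\hat s(-u)$ are scalar. Comparing the $u^{0}$-coefficients (products taken in $A\otimes A$) gives $r_1=\tfrac12 r_0^2+c_r\,\Pi$ and $\hat s_1=\tfrac12\hat s_0^2+c_{\hat s}\,\Pi$ for scalar functions $c_r,c_{\hat s}$. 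Feeding these into (\ref{eq:r1 determined by r0}) shows that, for $X=r_0$ and for $X=\hat s_0$, the tensor $F[X]-\tfrac12\bigl((X^2)^{12}+(X^2)^{13}+(X^2)^{23}\bigr)$ lies in $\mathbb C\,(\mathds1\otimes\mathds1\otimes\mathds1)$, where $F[X]=X^{12}X^{13}-X^{23}X^{12}+X^{13}X^{23}$. Subtracting the two instances, projecting onto $\mathfrak{sl}_n\otimes\mathfrak{sl}_n\otimes\mathbb C\mathds1$ (which annihilates $\mathds1\otimes\mathds1\otimes\mathds1$), and simplifying with fact \ref{fact: pr a otimes b phi lemma  and other}~a), the terms quadratic in $\delta$ cancel and the identity collapses to $\bigl[\overline r_0,\beta\otimes\mathds1+\mathds1\otimes\beta\bigr]=0$. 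Thus $\beta$ is an infinitesimal symmetry of $\overline r_0$, whence $\beta=0$ by hypothesis. The difficulty is exactly that the symmetrized relation (\ref{eq:r1 determined by r0}) alone cannot see $\beta$, since it is absorbed into $s_1-r_1$; it is the scalarity of $r(u)r(-u)$ — that is, the absence of symmetries — which pins $r_1$ down enough to expose the symmetry equation for $\beta$.

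Finally, $\hat s_0=r_0$, so $D:=\hat s_1-r_1$ is unitary and, by (\ref{eq:r1 determined by r0}), satisfies $D^{12}+D^{13}+D^{23}=0$. Projecting this relation successively onto the components $\mathfrak{sl}_n\otimes\mathfrak{sl}_n$, then $\mathfrak{sl}_n\otimes\mathbb C\mathds1$, then $\mathbb C\,\Pi$ — each time specialising two of the $y_i$ to be equal and using unitarity — forces $D=0$. Hence $\hat s$ and $r$ share both $r_0$ and $r_1$, so $\hat s=r$ by lemma \ref{lem: r uniquely determined by r_0,r_1}; unwinding the gauge gives $s(u;y_1,y_2)=\exp\bigl(u(g(y_2)-g(y_1))\bigr)\,r(u;y_1,y_2)$ with $g$ meromorphic, as claimed.
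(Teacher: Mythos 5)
Your proof is correct, but it reaches the two decisive intermediate facts by a genuinely different route than the paper. For the shape of $s_{0}-r_{0}$, you invoke lemma \ref{lem:form of s} together with theorem \ref{thm: CYBE -> QYBE -> CYBE}(3) to get constancy of the $\mathfrak{sl}_{n}(\mathbb{C})$-valued part $\beta$ from the outset; the paper instead writes $s_{0}-r_{0}=\phi^{1}(y_{1},y_{2})-\phi^{2}(y_{2},y_{1})+\psi(y_{1},y_{2})\,\mathds{1}\otimes\mathds{1}$ with an a priori non-constant $\phi$ and deduces constancy from the non-degeneracy of $\overline{r}_{0}$ (a non-constant $\phi$ would force $\overline{r}_{0}$ to be a sum of two decomposable tensors). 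For the vanishing of $\beta$, the paper applies $\mbox{pr}\otimes\mbox{pr}\otimes\mbox{id}$ to the difference of the two instances of (\ref{eq:r1 determined by r0}) and then swaps the first two tensor factors together with $y_{1}\leftrightarrow y_{2}$, using unitarity to extract $\bigl[\overline{r}_{0},\phi^{1}+\phi^{2}\bigr]=0$; you instead use condition (b) of theorem \ref{thm: CYBE -> QYBE -> CYBE} (scalarity of $r(u)r(-u)$, available because $\overline{r}_{0}$ has no infinitesimal symmetries) to write $r_{1}=\tfrac{1}{2}r_{0}^{2}+c_{r}\,\mathds{1}\otimes\mathds{1}$ and likewise for the gauged $\hat{s}$, and your computation is sound: the terms quadratic in $\delta$ cancel identically, and the linear terms project under $\mbox{pr}\otimes\mbox{pr}\otimes\tfrac{1}{n}\mbox{tr}$ to $\tfrac{1}{2}\bigl[\overline{r}_{0},\beta^{1}+\beta^{2}\bigr]$ (the cross terms involving the constant $(\mbox{pr}\otimes\tfrac{1}{n}\mbox{tr})(r_{0})=\alpha_{r}$ cancel in pairs), yielding the same symmetry equation. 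Your route makes it conceptually transparent that it is the scalarity of $r(u)r(-u)$ which pins $r_{1}$ down enough to expose $\beta$, at the price of leaning on all of section 5; the paper's swap trick is more elementary and self-contained. One caveat applies to both arguments equally: lemma \ref{lem:form of s}, corollary \ref{cor:simple poles exist} and theorem \ref{thm: CYBE -> QYBE -> CYBE} require $r$ itself (not merely $\overline{r}_{0}$) to be non-degenerate, which is not literally among the stated hypotheses; the paper's own proof makes the same implicit assumption when it cites lemma \ref{lem:poles are simple}, and it is harmless in the intended application to theorem \ref{thm: generalizing Polishchuk}.
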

\begin{proof}
First, we show that $r$ is uniquely determined by $r_{0}$. By lemma
\ref{lem: r uniquely determined by r_0,r_1} $r$ is uniquely determined
by $r_{0}$ and $r_{1}$ and moreover $r_{1}$ is a solution of a
certain equation in $r_{0}$ which is given by (\ref{eq:r1 determined by r0}).
If $r'_{1}\neq r_{1}$ was a solution of (\ref{eq:r1 determined by r0})
with the same properties as $r_{1}$, then taking the difference we
would obtain a meromorphic function $\alpha:\left(\mathbb{C}^{2},0\right)\rightarrow A\otimes A$
with $\alpha^{21}(y_{2},y_{1})=\alpha(y_{1},y_{2})$ and \begin{equation}
\alpha^{12}(y_{1},y_{2})+\alpha^{13}(y_{1},y_{3})+\alpha^{23}(y_{2},y_{3})=0.\label{eq:recovering_AYBE_from_CYBE_3}\end{equation}
 Using lemma \ref{lem:poles are simple}, we also know that the residue
of $r(u;y_{1},y_{2})$ near $y_{1}=y_{2}$ is independent of $u$.
Comparing this to the Laurent expansion (\ref{eq:Laurent assumption}),
we derive that $r_{1}(y_{1},y_{2})$ has no poles along $y_{1}=y_{2}$,
hence the same is true for $\alpha(y_{1},y_{2})$. To prove that $r_{1}$
is determined by $r_{0}$, we need only show that $\alpha$ is already
zero. Choosing $y_{3}=y_{2}$ and then applying $\mbox{pr}\otimes\mbox{id}\otimes\mbox{id}$
to (\ref{eq:recovering_AYBE_from_CYBE_3}) we derive that $(\mbox{pr}\otimes\mbox{id})\left(\alpha(y_{1},y_{2})\right)=0$.
Similarly, $(\mbox{id}\otimes\mbox{pr})\left(\alpha(y_{1},y_{2})\right)=0$,
hence $\alpha(y_{1},y_{2})=f(y_{1},y_{2})\,\mathds{1}\otimes\mathds{1}$
where $f$ is a meromorphic function such that $f(y_{1},y_{2})+f(y_{1},y_{3})+f(y_{2},y_{3})=0$.
Since $r_{1}(y_{1},y_{2})$ has no pole along $y_{1}=y_{2}$ by lemma
\ref{lem:poles are simple}, $\alpha(y_{1},y_{1})$ exists. We may
deduce that $2f(y_{1},y_{2})=-f(y_{2},y_{2})$, so $f$ depends only
on the second variable. But then choosing $y_{2}=y_{1}=y_{3}$ we
read $3f(y_{1},y_{1})=0$, thus $f=0$.We have proved that $r$ is
uniquely determined by $r_{0}$.\\

\noindent It remains to prove that, provided $\overline{r}_{0}$ has
no infinitesimal symmetries, $r$ can be uniquely recovered from $\overline{r}_{0}$
up to the factor $\exp\left(u\left(g(y_{2})-g(y_{1})\right)\right)$
for some meromorphic function $g:\mathbb{C}\rightarrow\mathbb{C}$.
Note that this is equivalent to showing that $r_{0}(y_{1},y_{2})$
is uniquely determined by $\overline{r}_{0}(y_{1},y_{2})=(\mbox{pr}\otimes\mbox{pr})\left(r_{0}(y_{1},y_{2})\right)$
up to a summand of the form $\left(g(y_{2})-g(y_{1})\right)\,\mathds{1}\otimes\mathds{1}$.
By assumption $\left(s_{0}(y_{1},y_{2}),s_{1}(y_{1},y_{2})\right)$
is another tuple satisfying (\ref{eq:r1 determined by r0}) such that\[
s_{0}^{21}(y_{2},y_{1})=-s_{0}(y_{1},y_{2}),\, s_{1}^{21}(y_{2},y_{1})=s_{1}(y_{1},y_{2}).\]
 We claim that \[
s_{0}(y_{1},y_{2})=r_{0}(y_{1},y_{2})+\left(g(y_{2})-g(y_{1})\right)\,\mathds{1}\otimes\mathds{1}.\]
 Since $\overline{s}_{0}(y_{1},y_{2})=\overline{r}_{0}(y_{1},y_{2})$,
we may write \[
s_{0}(y_{1},y_{2})=r_{0}(y_{1},y_{2})+\phi^{1}(y_{1},y_{2})-\phi^{2}(y_{2},y_{1})+\psi(y_{1},y_{2})\,\mathds{1}\otimes\mathds{1}\]
 for a $\mathfrak{sl}_{n}(\mathbb{C})$ valued function $\phi$ and
a scalar function $\psi$. Denoting the left-hand side of (\ref{eq:r1 determined by r0})
by $LHS(r)$, we have \[
0=\left(\mbox{pr}\otimes\mbox{pr}\otimes\mbox{pr}\right)(LHS(s)-LHS(r))=\overline{r}_{0}^{12}(y_{1},y_{2})\,\left[\phi^{3}(y_{3},y_{2})-\phi^{3}(y_{3},y_{1})\right]+\]
 \[
+\overline{r}_{0}^{23}(y_{2},y_{3})\,\left[\phi^{1}(y_{1},y_{3})\right.\left.-\phi^{1}(y_{1},y_{2})\right]+\overline{r}_{0}^{13}(y_{1},y_{3})\,\left[\phi^{2}(y_{2},y_{3})-\phi^{2}(y_{2},y_{1})\right].\]
 If the function $\phi$ is not constant then contracting this equation
with a generic functional in the third component we derive that $\overline{r}_{0}$
is a sum of two decomposable tensors, that is $\bar{r}_{0}=a_{1}\otimes b_{1}+a_{2}\otimes b_{2}$
where all terms depend on $y_{1},y_{2}$. But $\bar{r}_{0}$ is non-degenerate
by assumption, so $\mbox{span}_{\mathbb{C}}\left(\left\{ a_{1},a_{2}\right\} \right)\cong\mathfrak{sl}_{n}(\mathbb{C})$,
which is impossible for any $n\geq2$. Thus $\phi\in\mathfrak{sl}_{n}(\mathbb{C})$
is constant. Applying $(\mbox{pr}\otimes\mbox{pr}\otimes\mbox{id})$
to $LHS(s)-LHS(r)$ yields\begin{equation}
\begin{array}{c}
{\displaystyle \frac{{}}{{}}}(\mbox{pr}\otimes\mbox{pr}\otimes\mbox{id})\left(s_{1}^{12}(y_{1},y_{2})-r_{1}^{12}(y_{1},y_{2})\right)=(\mbox{pr}\otimes\mbox{pr}\otimes\mbox{id})\left(r_{0}^{12}(y_{1},y_{2})\phi^{1}-\right.\\
\left.{\displaystyle \frac{{}}{{}}}-\phi^{2}r_{0}^{12}(y_{1},y_{2})\right)-\phi^{1}\phi^{2}+\left(\psi(y_{1},y_{3})-\psi(y_{2},y_{3})\right)\,\overline{r}^{12}(y_{1},y_{2}).\end{array}\label{eq:recovering_AYBE_from_CYBE_4}\end{equation}
 This implies that $\psi(y_{1},y_{3})-\psi(y_{2},y_{3})$ is actually
independent of $y_{3}$, hence equal to some function $\beta(y_{1},y_{2})$.
Also, we know by unitarity of $r$ that $\psi(y_{1},y_{2})=-\psi(y_{2},y_{1})$,
thus $\beta(y_{1},y_{2})=\psi(y_{1},y_{3})+\psi(y_{3},y_{2})$. It
follows from lemma \ref{lem:poles are simple} that $r_{0}$ and $s_{0}$
have the same pole along $y_{1}=y_{2}$, hence $\psi(y_{1},y_{1})$
exists and we may deduce that $\beta(y_{1},y_{2})=\psi(y_{1},y_{1})+\psi(y_{1},y_{2})=\psi(y_{1},y_{2})$.
Thus the definition of $\beta$ reads $\psi(y_{1},y_{2})=\psi(y_{1},y_{3})-\psi(y_{2},y_{3})$.
Therefore, defining $g(y)=\psi(y,a)$ for some fixed $a\in\mathbb{C}$,
we have $\psi(y_{1},y_{2})=g(y_{1})-g(y_{2})$. Altogether \begin{equation}
s_{0}(y_{1},y_{2})=r_{0}(y_{1},y_{2})+\phi^{1}-\phi^{2}+\left(g(y_{1})-g(y_{2})\right)\,\mathds{1}\otimes\mathds{1}\label{eq: s0 in terms of r0}\end{equation}
 Since $s_{0}$ and $r_{0}$ are both meromorphic, so is $\psi$ and
thus also $g$.

Next, we replace $r(u;y_{1},y_{2})$ by $\exp\left(u\left(g(y_{2})-g(y_{1})\right)\right)r(u;y_{1},y_{2})$
and hence may assume that $g=0$ in the above formula for $s_{0}$.
Thus (\ref{eq:recovering_AYBE_from_CYBE_4}) yields\[
(\mbox{pr}\otimes\mbox{pr})\left(s_{1}(y_{1},y_{2})-r_{1}(y_{1},y_{2})\right)=(\mbox{pr}\otimes\mbox{pr})\left(r_{0}(y_{1},y_{2})\phi^{1}-\phi^{2}r_{0}(y_{1},y_{2})\right)-\phi^{1}\phi^{2}.\]
 We exchange the first two components, make the substitutions $y_{1}\leftrightarrow y_{2},y_{2}\leftrightarrow y_{1}$
and use unitarity of $r$ for both sides of the resulting equation.
Comparing the result with the above equation, we derive \[
(\mbox{pr}\otimes\mbox{pr})\left(r_{0}(y_{1},y_{2})\phi^{1}-\phi^{2}r_{0}(y_{1},y_{2})\right)=(\mbox{pr}\otimes\mbox{pr})\left(-r_{0}(y_{1},y_{2})\phi^{2}+\phi^{1}r_{0}^{}(y_{1},y_{2})\right).\]
 By fact \ref{fact: pr a otimes b phi lemma  and other} a) we deduce
that $\left[\overline{r}_{0}(y_{1},y_{1}),\phi^{1}+\phi^{2}\right]=0$.
But then $\phi$ is an infinitesimal symmetry of $\overline{r}_{0}$,
so $\phi=0$. Thus $s_{0}=r_{0}$. 
\end{proof}
\begin{rem}
By theorem \ref{thm: CYBE -> QYBE -> CYBE} (1), the assumption of
theorem \ref{thm:recovering AYBE sol. from their CYBE} on the non-degeneracy
of $\bar{r}_{0}$ is automatically satisfied if $r$ itself is non-degenerate.
Moreover, we deduce from the proof of theorem \ref{thm:recovering AYBE sol. from their CYBE}
that in that case $r$ is already uniquely determined by $r_{0}$.
\end{rem}
\begin{cor}
In the notations of theorem \ref{thm:recovering AYBE sol. from their CYBE},
assume that $r_{0}(y_{1},y_{2})$ and $s_{0}(y_{1},y_{2})$ have the
same poles on $\bigl(\mathbb{C}^{2}\setminus V\bigl((y_{1}-y_{2})\bigr)\bigr)$.
Then $g$ is a holomorphic function. Thus, $r$ and $s$ are gauge
equivalent. 
\end{cor}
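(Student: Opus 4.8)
The plan is to upgrade the meromorphic function $g$ produced by Theorem \ref{thm:recovering AYBE sol. from their CYBE} to a holomorphic one, using the extra hypothesis on the poles. Recall that inside the proof of that theorem we established $\phi=0$, so that (\ref{eq: s0 in terms of r0}) collapses to $s_{0}(y_{1},y_{2})-r_{0}(y_{1},y_{2})=\bigl(g(y_{1})-g(y_{2})\bigr)\,\mathds{1}\otimes\mathds{1}$. In other words, the scalar function $g(y_{1})-g(y_{2})$ is precisely the $\mathds{1}\otimes\mathds{1}$-coefficient of the difference $s_{0}-r_{0}$, and this is the only place where the non-holomorphy of $g$ could enter.

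First I would translate the hypothesis. By assumption $r_{0}$ and $s_{0}$ have the same poles on $\mathbb{C}^{2}\setminus V(y_{1}-y_{2})$, so their difference $s_{0}-r_{0}$ is holomorphic there. Combined with the formula above, this says exactly that the scalar function $g(y_{1})-g(y_{2})$ is holomorphic on the off-diagonal locus $\{y_{1}\neq y_{2}\}$.

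Next I would deduce that $g$ itself is holomorphic. Since $g$ is meromorphic its poles are isolated, so suppose for contradiction that $g$ has a pole at some point $p\in\mathbb{C}$. Choose any $a\neq p$ at which $g$ is finite and restrict to $y_{2}=a$; then $y_{1}\mapsto g(y_{1})-g(a)$ inherits a pole at $y_{1}=p$. But the point $(p,a)$ lies off the diagonal, contradicting the holomorphy of $g(y_{1})-g(y_{2})$ on $\{y_{1}\neq y_{2}\}$ just obtained. Hence $g$ has no poles and is in fact entire. I expect this separation-of-variables step, making sure a pole of $g$ is genuinely detected off the diagonal rather than being cancelled, to be the only point requiring care.

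Finally, with $g$ holomorphic, $\phi(v,y)=\exp\bigl(vg(y)\bigr)\cdot\mathds{1}$ is the germ of a holomorphic $\mbox{GL}_{n}(\mathbb{C})$-valued function, so by Example \ref{exa: example for gauge equivalence} the identity $s(u;y_{1},y_{2})=\exp\bigl(u(g(y_{2})-g(y_{1}))\bigr)\,r(u;y_{1},y_{2})$ of Theorem \ref{thm:recovering AYBE sol. from their CYBE} exhibits $s$ as the image of $r$ under this gauge transformation. Thus $r$ and $s$ are gauge equivalent in the sense of Definition \ref{def:gauge equivalence on AYBE}, as claimed.
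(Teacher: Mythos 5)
Your proposal is correct and follows essentially the same route as the paper: both read off $g(y_{1})-g(y_{2})$ as the $\mathds{1}\otimes\mathds{1}$-part of $s_{0}-r_{0}$ via (\ref{eq: s0 in terms of r0}), deduce holomorphy of $g$ from the coincidence of poles, and then invoke example \ref{exa: example for gauge equivalence}. The only (harmless) difference is that the paper also cites lemma \ref{lem:poles are simple} to handle the diagonal, whereas your separation-of-variables step, evaluating at a point $(p,a)$ off the diagonal, makes that unnecessary and is a nice explicit justification of a step the paper leaves implicit.
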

\begin{proof}
It follows from the assumption and lemma \ref{lem:poles are simple}
that the poles of $r_{0}$ and $s_{0}$ coincide. By (\ref{eq: s0 in terms of r0})
this implies that $g$ is holomorphic. The remaining statement follows
from the discussion in example \ref{exa: example for gauge equivalence}. 
\end{proof}
\noindent Combining the above corollary with theorem \ref{thm: r(n,d) is unitary non-deg AYBE solution}
yields the final result of this section, finishing the proof of theorem
\ref{thm: generalizing Polishchuk} ii).

\begin{cor}
\label{cor: if r and s are both constructed geometrically}In the
notations of theorem \ref{thm:recovering AYBE sol. from their CYBE},
assume that both $r$ and $s$ can be obtained by the procedure described
in section \ref{sec:The-procedure}. Then $r$ and $s$ are gauge
equivalent. 
\end{cor}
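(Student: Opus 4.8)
The plan is to reduce the corollary to an application of the preceding corollary. Concretely, I must verify that the two hypotheses needed there—non-degeneracy of $\overline{r}_0$ and the coincidence of the poles of $r_0$ and $s_0$ on $\mathbb{C}^2\setminus V(y_1-y_2)$—are both automatically satisfied once $r$ and $s$ are produced by the algorithm of section \ref{sec:The-procedure}. Granting this, the previous corollary immediately yields that the function $g$ in theorem \ref{thm:recovering AYBE sol. from their CYBE} is holomorphic, so that $s(u;y_1,y_2)=\exp\bigl(u(g(y_2)-g(y_1))\bigr)\,r(u;y_1,y_2)$ with $g$ holomorphic, which by example \ref{exa: example for gauge equivalence} is exactly the statement that $r$ and $s$ are gauge equivalent.

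First I would dispose of the non-degeneracy hypothesis. Since $r$ is constructed geometrically, theorem \ref{thm: r(n,d) is unitary non-deg AYBE solution} tells us that $r$ is a non-degenerate unitary solution of (\ref{eq:AYBE in 3 variables}) of the form (\ref{eq:Laurent assumption}). Then theorem \ref{thm: CYBE -> QYBE -> CYBE} (1) guarantees that the associated $\overline{r}_0$ is non-degenerate (this is also recorded in the remark preceding the corollary). Thus the standing assumptions of theorem \ref{thm:recovering AYBE sol. from their CYBE} hold, provided $\overline{r}_0$ additionally has no infinitesimal symmetries; for the geometric solutions this last property is the one verified case-by-case in section \ref{sec:The-procedure} (as for $c_{(2,1)}$ and $c_{(3,1)}$), and I would simply invoke it as part of the hypothesis that $r$ and $s$ arise from the procedure.

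The crux is the pole-matching condition. Here I would use theorem \ref{thm: r(n,d) is unitary non-deg AYBE solution}, which asserts that any solution obtained from the algorithm is holomorphic on $\mathbb{C}^3\setminus V\bigl(v(y_1-y_2)\bigr)$. Reading off the constant term $r_0$ of the Laurent expansion (\ref{eq:Laurent assumption}) in $v$, this holomorphicity away from $v=0$ forces $r_0(y_1,y_2)$ to be holomorphic on $\mathbb{C}^2\setminus V(y_1-y_2)$; the same applies verbatim to $s_0$. Consequently $r_0$ and $s_0$ have no poles at all off the diagonal $y_1=y_2$, so in particular their poles on $\mathbb{C}^2\setminus V(y_1-y_2)$ coincide—both pole sets are empty. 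This is precisely the hypothesis of the preceding corollary, and I expect this verification to be the only genuine step: extracting the diagonal-free holomorphicity of $r_0$ and $s_0$ from the off-$v=0$ holomorphicity of the full solutions $r$ and $s$.

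Finally, having checked both hypotheses, I would apply the preceding corollary to conclude that $g$ is holomorphic. The formula $s(u;y_1,y_2)=\exp\bigl(u(g(y_2)-g(y_1))\bigr)\,r(u;y_1,y_2)$ from theorem \ref{thm:recovering AYBE sol. from their CYBE}, together with the gauge transformation $\phi(v,y)=\exp(vg(y))\cdot\mathds{1}$ exhibited in example \ref{exa: example for gauge equivalence}, then shows that $r$ and $s$ are gauge equivalent in the sense of definition \ref{def:gauge equivalence on AYBE}. The main obstacle, as noted, is not any deep argument but rather the bookkeeping of translating the holomorphicity statement of theorem \ref{thm: r(n,d) is unitary non-deg AYBE solution} into the pole-coincidence statement the previous corollary requires; everything else is a direct chaining of the cited results.
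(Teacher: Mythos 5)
Your proposal is correct and follows exactly the paper's (essentially one-line) argument: the paper likewise derives this corollary by combining the preceding corollary with the holomorphicity statement of theorem \ref{thm: r(n,d) is unitary non-deg AYBE solution}, the point being that $r_0$ and $s_0$ then have no poles off the diagonal, so the pole-coincidence hypothesis holds trivially. Your extraction of the holomorphicity of $r_0$ from that of $r$ via the Laurent expansion (\ref{eq:Laurent assumption}) is the right bookkeeping step.
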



\begin{thebibliography}{10}

\bibitem{Henrichd}
I.~Burban, T.~Henrich, {\it  Vector bundles on plane cubic curves and the classical Yang--Baxter equation,}
paper in preparation.

\bibitem{Burban}
I.~Burban, B.~Kreu\ss{}ler,
\emph{Vector bundles on degenerations of elliptic curves and Yang--Baxter equations},
arXiv:0708.1685, to appear in Memoirs of AMS.

\bibitem{Polischuck2002}
A.~Polishchuk, {\it  Classical Yang--Baxter equation and the $A\sb
\infty$--constraint,}
Adv. Math.  {\bf 168}  (2002),  no. {\bf 1}, 56--95.

\bibitem{Polischuckb}
A.~Polishchuk,
\textit{Massey products on cycles of projective lines and trigonometric solutions of the Yang--Baxter equations}, in  ``Algebra, arithmetic, and geometry: in honor of Yu. I. Manin''.
Vol.~II, 573--617, Progr.~Math.~\textbf{270}, Birkh\"auser, 2009.

\bibitem{Stolin1991}
A.~Stolin, {\it  On rational solutions of Yang--Baxter equation for $\mathfrak{sl}(n)$,}
Math. Scand.  {\bf 69}  (1991),  no. {\bf 1}, 57--80.

\end{thebibliography}
\end{document}